\numberwithin{equation}{section}
\theoremstyle{plain}
\def\wh{\widehat}
\def\wt{\widetilde}
\def\bb{\mbox{\boldmath$\beta$}}
\def\hb{\mbox{$\widehat\bb$}}
\def\a{{\bf a}}
\def\b{{\bf b}}
\def\A{{\bf A}}
\def\B{{\bf B}}
\def\U{{\bf U}}
\def\D{{\bf D}}
\def\L{{\bf L}}
\def\e{{\bf e}}
\def\I{{\bf I}}
\def\x{{\bf x}}
\def\Z{{\bf Z}}
\def\z{{\bf z}}
\def\u{{\bf u}}
\def\e{{\bf e}}
\def\u{{\bf u}}
\def\mR{\mathbb{R}}
\def\calS{\mathcal{S}}
\def\pr{\mbox{pr}}
\def\tr{\mbox{tr}}
\def\var{\mbox{var}}
\def\vec{\mbox{vec}}
\def\cov{\mbox{cov}}
\def\diag{\mbox{diag}}
\newcommand{\trans}{^{\mbox{\tiny{T}}}}
\def\defby{\stackrel{\mbox{\textrm{\tiny def}}}{=}}
\newcommand{\bGam}{\mbox{\boldmath $\Gamma$}}
\newcommand{\bSig}{\mbox{\boldmath $\Sigma$}}
\newcommand{\hSig}{\mbox{$\widehat{\bSig}$}}
\newcommand{\bOme}{\mbox{\boldmath $\Omega$}}
\newcommand{\bPsi}{\mbox{\boldmath $\Psi$}}
\newcommand{\hOme}{\mbox{$\widehat{\bOme}$}}
\newcommand{\hGam}{\mbox{$\widehat{\bGam}$}}
\newcommand{\bLam}{\mbox{\boldmath $\Lambda$}}
\newcommand{\hLam}{\mbox{$\widehat{\bLam}$}}
\newcommand{\E}{\mbox{E}}
\newcommand{\supp}{\mathcal{S}}
\DeclareMathOperator*{\argmin}{arg\,min} 
\def\beqr{\begin{eqnarray}}
\def\eeqr{\end{eqnarray}}
\def\beqrs{\begin{eqnarray*}}
	\def\eeqrs{\end{eqnarray*}}
\def\bep{\begin{prop}}
	\def\eep{\end{prop}}
\newtheorem{theo}{\bf Theorem}
\newtheorem{lemm}{\bf Lemma}
\newtheorem{prop}{\bf Proposition}
\begin{document}
	
	\begin{frontmatter}
		\title{Penalized Interaction Estimation for Ultrahigh Dimensional Quadratic Regression
		} \runtitle{Penalized Interaction Estimation}
		
		\begin{aug}
			\author{\fnms{Cheng} \snm{Wang}\thanksref{m1}
				\ead[label=e1]{chengwang@sjtu.edu.cn}},
			\author{\fnms{Binyan} \snm{Jiang}\thanksref{m2}
				\ead[label=e2]{by.jiang@polyu.edu.hk}}
			\and
			\author{\fnms{Liping} \snm{Zhu}\thanksref{m3}
				\ead[label=e3]{zhu.liping@ruc.edu.cn}
			}

			\affiliation{Shanghai Jiao Tong University\thanksmark{m1}, Hong Kong Polytechnic University
				\thanksmark{m2} and Renmin University of China\thanksmark{m3}}

			\address{Cheng Wang\\School of Mathematical Sciences\\
				Shanghai Jiao Tong University\\
				Shanghai 200240, China\\
				\printead{e1}}
			
			\address{Binyan Jiang\\Department of Applied Mathematics\\
			Hong Kong Polytechnic University\\
			Hong Kong\\
				\printead{e2}}
			
			\address{Liping Zhu\\Research Center for Applied Statistical Science \\
				Institute of Statistics and Big Data\\
				Renmin University of China\\
				Beijing 100872, China.\\
				\printead{e3}
			}
		\end{aug}
		
		\vskip1cm
			\begin{abstract}
			Quadratic regression goes  beyond the linear model by simultaneously including  main effects and interactions between the covariates. The problem of interaction estimation in high dimensional quadratic regression has received extensive attention in the past decade.  In this article we introduce a novel  method which allows us to estimate the main effects and interactions separately. Unlike existing methods for ultrahigh dimensional quadratic regressions,  our proposal  does not require the widely used heredity assumption. In addition, our proposed estimates have explicit formulas  and obey the invariance principle at the population level.  We  estimate the interactions of matrix form   under  penalized convex loss function. The resulting estimates are shown to be consistent even when the covariate dimension is an exponential order of the sample size. We develop an efficient ADMM algorithm to implement the penalized estimation.  This ADMM algorithm fully explores the cheap computational cost of matrix multiplication and is much more efficient than existing penalized methods such as all pairs LASSO.  We demonstrate the promising performance of our proposal through extensive numerical studies.
		\end{abstract}

		\begin{keyword}[class=AMS]
			62H20, 62H99, 62G99.
		\end{keyword}
		\begin{keyword}
			\kwd{High dimension; interaction estimation; quadratic regression;  support recovery.}
		\end{keyword}
		
	\end{frontmatter}
	
	\newpage
\section{INTRODUCTION}
In many scientific discoveries, a fundamental problem  is to understand how the features under investigation interact with each other. Interaction estimation has been shown to be very attractive in both parameter estimation and model prediction \citep{bien2013lasso, hao2016model}, especially for  data sets with complicated structures.  \cite{efron2004least} pointed out that for  Boston housing data, prediction accuracy can be significantly improved if interactions are  included in addition to all main effects.  In general, ignoring interactions by considering main effects alone may lead to an inaccurate or even a biased  estimation, resulting in poor prediction of an outcome of interest, whereas considering  interactions as well as main effects  can improve model interpretability and prediction substantially,  thus achieve a better understanding of how the outcome  depends on the predictive features \citep{fan2015innovated}. While it is important to identify interactions which may reveal real relationship between the outcome and the predictive features,  the number of  parameters scales squarely with that of the predictive features, making parameter estimation and model prediction very challenging for problems with large or even moderate dimensionality.

\subsection{Interaction Estimation, Feature Selection and Screening}
Estimating  interactions is a  challenging problem  because the number of pairwise interactions increases quadratically with the number of the covariates. In the past decade, there has been a surge of interest in interaction estimation in  quadratic regression. Roughly speaking, existing procedures for interaction estimation can be classified into three categories. In the first category of  low or moderate dimensional setting, standard techniques such as ordinary least squares can be readily used to estimate all the pairwise interactions as well as the main effects.  This simple one-stage strategy, however, becomes impractical or even infeasible for moderate or high dimensional problems, owing to rapid increase in dimensionality incurred by interactions. In the second category of moderate or high dimensional setting where feature selection becomes imperative, several one-stage regularization methods are proposed and some require  either the strong or the weak heredity assumption. See, for example, 
\cite{yuan2009structured}, 
\cite{choi2010variable}, \cite{bien2013lasso}, \cite{lim2015learning},  and \cite{haris2016convex}. 
These regularization methods are computationally feasible  and the theoretical properties of the resulting estimates are well understood for moderate or high dimensional problems. However, in the third category of ultrahigh dimension problems, these regularization methods are no longer feasible because their implementation requires storing and manipulating large scale design matrix and solving complex constrained optimization problems. The memory and computational cost is usually extremely expensive and prohibitive.  Very recently, several two-stage approaches are proposed for both ultrahigh dimensional regression and classification problems, including \cite{hao2014interaction},  \cite{fan2015innovated}, \cite{hao2016model} and \cite{kong2017interaction}.  Two-stage approaches estimate main effects and interactions at two separate stages, so their computational complexity is dramatically reduced. However, 
these two-stage approaches hinge heavily on either the  strong or weak heredity assumption.  These methods are computationally scalable but may completely break down when the heredity assumption is violated. 

\subsection{Heredity Assumption and Invariance Principle in Quadratic Regression}
As an extra layer of flexibility to  linear  models,   quadratic regressions include both main effects and pairwise interactions between the covariates. Denote $Y$ the outcome variable and $\x = (X_1,\ldots,X_p)\trans\in\mR^p$ the   covariate vector. For notational clarity, we define $\u \defby \E(\x)\in\mR^p$. In general, quadratic regression has the form of
\beqr\label{model}
\E(Y\mid\x) = \alpha + (\x-\u) \trans \bb + (\x-\u) \trans \bOme (\x-\u),
\eeqr
where $\alpha \in \mR^1$, $\bb = (\beta_1,\ldots,\beta_p)\trans \in\mR^p$ and $\bOme = (\bOme_{k,l})_{p\times p}  \in\mR^{p \times p}$ are all unknown parameters. 
To ensure model  identifiability, we further assume that $\bOme$ is symmetric, that is, $\bOme \trans=\bOme$, or equivalently, $\bOme_{k,l} = \bOme_{l,k}$, $1\le k,l\le p$. Our goal is to estimate $\bb$ and $\bOme$ which  characterize respectively  main effects and interactions. We remark here that the intercept $\alpha$ is also useful for  prediction.

In the literature,   heredity structures \citep{nelder1977reformulation, hamada1992analysis} have been widely imposed to   avoid  quadratic computational cost of searching over all pairs of interactions.  The heredity structures assume  that the support of $ \bOme$ could be inferred from the support of $\bb$. The strong heredity assumption requires that an interaction between two covariates be included in the model only if both main effects are important, while the weak one relaxes such a constraint to the presence of at least one main effect being important. 
In symbols, the strong and weak heredity structures are defined, respectively, as follows:
\beqrs \label{cqr}
& \textrm{strong heredity: }& \bOme_{k,l} \neq 0 \Rightarrow \beta_k^2>0  ~\textrm{\textrm{and}}~ \beta_l^2 > 0,\\
& \textrm{weak heredity: }& \bOme_{k,l} \neq 0 \Rightarrow \beta_k^2 + \beta_l^2 > 0.
\eeqrs
With the heredity assumptions, one can first seek a small number of important main effects and then only consider interactions involving these discovered main effects. It is however quite possible that main effects corresponding to important interactions are hard to detect.  An example is $Y = (1 + X_1)(1 +X_2)  + \varepsilon$,  where $X_1$ and $X_2$ are drawn independently from ${\cal N}(-1,1)$ and $\varepsilon$ is standard normal. In this example, $\cov(X_1,Y) = \cov(X_2,Y) = 0$. The main effects $X_1$ and $X_2$ are thus unlikely detectable through a working linear model $Y = \alpha_0 +  \alpha_1X_1 + \alpha_2 X_2 + \epsilon$, indicating that the heredity assumptions do not facilitate to find interactions by searching for main effects first. From a practical perspective, \cite{ritchie2001multifactor} provided a real data example to demonstrate the existence of pure interaction models  in practice. \cite{cordell2009detecting} also raised serious concerns that many existing methods that depend  on the heredity assumption may miss pure interactions in the absence of main effects. 

An ideal quantification of importance of the main effects and interactions should satisfy the invariance principle with respect to location-scale transformation of the covariates. It is natural and a common strategy to quantify the importance of main effects and interactions through the supports of $\bb$ and $\bOme$ in model (\ref{model}). In conventional linear model where only main effects are present and   interactions are absent (i.e., $\bOme = {\bf0}_{p\times p}$ in model (\ref{model})), the  invariance principle  is satisfied. In contrast, in quadratic regression (\ref{model}) with a general $\bOme$ the  invariance principle is very likely  violated. To demonstrate this issue, we can recast model (\ref{model}) as  \beqr\label{model1}
\E(Y\mid \x) = (\alpha - \u\trans\bb + \u\trans\bOme\u) + \x\trans(\bb-2\bOme\u) + \x\trans\bOme\x.\eeqr In this model, the   importance of main effects and interactions is naturally characterized through the support of $(\bb-2\bOme\u) $ and $\bOme$, respectively, indicating that  the interactions  are invariant whereas the main effects are sensitive to location transformation. In ultrahigh dimensional quadratic regression, using one-stage approaches which simultaneously estimate  main effects and interactions  under the heredity assumption or using two-stage approaches which search for main effects prior to searching for interactions in model (\ref{model}) and model (\ref{model1}) may lead to quite different conclusions. It is thus desirable to estimate interactions directly without knowing the main effects in advance. Direct interaction estimation without heredity constraints is, however, to the best of our knowledge, much more challenging and still unsolved in the literature.

\subsection{Our Contributions}
In this article we consider interaction estimation in ultrahigh dimensional quadratic regressions without heredity assumption. We make at least the following two important contributions to the literature. 
\begin{enumerate}
	\item 
	We motivate our proposal with the goal of obtaining a general and explicit expression for quadratic regression with as minimal assumptions as possible. Surprisingly, it turns out that such an explicit solution only relies on certain moment conditions on the ultrahigh dimensional covariates, which will be automatically satisfied by the widely used normality assumption. Explicit forms can be derived for both the main effects and the interactions, from which it can be seen  that the quadratic regression could be implemented as two independent tasks relating to the main effects and interactions separately. Under weaker moment assumptions, our approach is still valid in detecting the direction of the true interactions. Our proposal is different from existing one-step or two-step procedures in that we do not require the heredity assumption and our proposal give explicit forms for both the main effects and the interactions. Estimating the main effects through a separate working linear model ensures that  the resulting estimate satisfies  the  desirable invariance principle. What is more, we show that our approach for interaction detection is robust to the estimation of main effects in that even when the linear effect can not be well estimated, we can still successfully detect the interactions. 
	\item
	We show that the interaction inference is equivalent to a particular matrix estimation at the population level. We  estimate the interactions of matrix form   under  penalized convex loss function, which yields a sparse solution. We derive the theoretical consistence of our proposed  estimation when the covariate dimension is an exponential order of the sample size. Compared with the conventional penalized least squares approach, the penalization of matrix form  is appealing in both memory storage and computation cost. An efficient  ADMM algorithm is developed to implement our procedure. This algorithm fully explores the cheap computational cost for   matrix multiplication   and  is even much more efficient than existing penalized methods. We have also developed an R package ``PIE" to implement our proposal. 
\end{enumerate}

The remainder of this paper is organized as follows. We begin in Section 2 with the quadratic regression model and  derive  closed forms for both the main effects and the interactions. We propose a direct penalized estimation for high dimensional sparse quadratic model. To implement our proposal an efficient ADMM algorithm is provided. We also study the theoretical properties of our proposed estimates. We illustrate  the   performance of our proposal through simulations  in Section 3  and an application to a real world problem in Section 4.   We give some brief comments   in Section 5. All technical details are relegated to Appendix.

\section{THE ESTIMATION PROCEDURE}

\subsection{The Rationale}
In this section we discuss how to estimate $\bb$ and $\bOme$, which characterize the main effects and interactions in model (\ref{model}), respectively. Note that $\bb =E\left\{\partial E(Y\mid\x)/(\partial\x)\right\}$ and $\bOme= E\left\{\partial^2E(Y\mid\x)/(\partial\x\partial\x\trans)\right\}\big/ 2.$
Therefore, estimating $\bb$ and $\bOme$ amounts to estimating $E\left\{\partial E(Y\mid\x)/(\partial\x)\right\}$ and $E\left\{\partial^2E(Y\mid\x)/(\partial\x\partial\x\trans)\right\}$, respectively, which is however not straightforward, especially when $\x$ is ultrahigh dimensional. To illustrate the rationale of our proposal, we assume for now that   $\x$ follows  ${\cal N}(\u,\bSig)$. It follows immediately from Stein's Lemma \citep{stein1981estimation,li1992principal} that
\beqrs
&E\left\{\partial E(Y\mid\x)/(\partial\x)\right\} = \bSig^{-1}\cov(\x,Y) \textrm{ and } \\
&E\left\{\partial^2E(Y\mid\x)/(\partial\x\partial\x\trans)\right\} = \bSig^{-1}\bLam_y\bSig^{-1},
\eeqrs where $\bLam_y \defby E\Big[\left\{Y-E(Y)\right\}(\x-\u)(\x-\u)\trans\Big]$. 
Define $r \defby Y - E(Y) - (\x-\u)\trans\bb$, which is the residual obtained by regressing $Y$ on $\x$ linearly. The Hessians of $E(Y\mid\x)$ and $E(r\mid\x)$ are  equal. Accordingly, we have 
\[E\left\{\partial^2E(Y\mid\x)/(\partial\x\partial\x\trans)\right\} = E\left\{\partial^2E(r \mid\x)/(\partial\x\partial\x\trans)\right\}.\] 
By Stein's Lemma, we can obtain that  
\[E\left\{\partial^2E(r\mid\x)/(\partial\x)(\partial\x\trans)\right\} = \bSig^{-1}\bLam_r\bSig^{-1},\] 
where $\bLam_r \defby E\left\{r (\x-\u)(\x-\u)\trans\right\}$. This indicates that, if $\x$ is normal, we have  explicit forms for  $\bb$ and $\bOme$. Specifically,
\beqrs
&\bb = \bSig^{-1}\cov(\x,Y),\textrm{ and }   
\bOme=  \bSig^{-1}\bLam\bSig^{-1}\big/2, \\
& \textrm{ where } \bLam \textrm{ stands for } \textrm{either }\bLam_y \textrm{ or } \bLam_r.
\eeqrs

We remark here that the normality assumption is widely used in the literature of interaction estimation. See, for example,  \cite{hao2014interaction},  \cite{simon2015permutation}, \cite{bien2015convex}  and \cite{hao2016model}.   In the present context we show that the normality assumption can be relaxed. Let $\tr(\A)$ be the trace operator of matrix $\A = (\A_{k,l})_{p\times p}$. In particular, $\tr(\A) = \sum\limits_{k=1}^p\A_{k,k}$.

\begin{prop} \label{prop1}
	Suppose that $\x$ is drawn from the  factor model $\x = \bGam_0\z + \u$, where 	$\bGam_0$ satisfies $\bGam_0\bGam_0\trans=\bSig>0$ and $\z \defby (Z_1,\ldots,Z_q)\trans$ where $Z_1,\cdots,Z_q$ are independent and identically distributed (i.i.d.) with
	${\rm E}(Z_k) = 0$, ${\rm E}(Z_k^2)=1$, ${\rm E}(Z_k^3)=0$, ${\rm E}(Z_k^4)=\Delta$. We further assume either  
	(C1): $\Delta=3$ or (C2):  $\diag(\bGam_0 \trans \bOme \bGam_0)=\textbf{0}$.
	Then the parameters $\alpha$, $\bb$ and $\bOme$ in model (\ref{model}) have the following explicit forms:
	\beqr\label{explicit}
	&\\
	&\alpha = E (Y)-\tr(\bSig^{-1}\bLam)\big/2, \ \bb =\bSig^{-1}\cov(\x,Y) \textrm{ and }  \bOme=  \bSig^{-1} \bLam \bSig^{-1}\big/2. \nonumber
	\eeqr 
\end{prop}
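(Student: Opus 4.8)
The plan is to prove Proposition \ref{prop1} by a direct moment computation in the latent coordinates $\z$, bypassing Stein's Lemma entirely (the Lemma only motivated the formulas). First I would remove $Y$ in favour of $E(Y\mid\x)$ wherever it appears inside $\cov(\x,Y)$, $\bLam_y$ and $\bLam_r$: since $Y-E(Y\mid\x)$ has conditional mean zero, it is uncorrelated with every integrable function of $\x$, so $\cov(\x,Y)=\cov\{\x,E(Y\mid\x)\}$ and similarly for the two $\bLam$'s. Writing $g(\x)\defby E(Y\mid\x)$ and substituting $\x-\u=\bGam_0\z$, set $\mathbf{v}\defby\bGam_0\trans\bb\in\mR^q$ and $\mathbf{M}\defby\bGam_0\trans\bOme\bGam_0\in\mR^{q\times q}$, which is symmetric. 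Then, using $E(\z\z\trans)=\I_q$, one has $E(Y)=\alpha+\tr(\mathbf{M})$ and
\[
g(\x)-E(Y)=\z\trans\mathbf{v}+\z\trans\mathbf{M}\z-\tr(\mathbf{M})=\sum_k v_kZ_k+\sum_k M_{kk}(Z_k^2-1)+\sum_{k\ne l}M_{kl}Z_kZ_l.
\]

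Next I would compute the two quantities $E\{\z\,(g(\x)-E(Y))\}$ and $E\{\z\z\trans(g(\x)-E(Y))\}$ monomial by monomial, using independence of the $Z_k$ together with ${\rm E}(Z_k)={\rm E}(Z_k^3)=0$, ${\rm E}(Z_k^2)=1$, ${\rm E}(Z_k^4)=\Delta$. For the first, every term $Z_jZ_k$, $Z_j(Z_k^2-1)$, $Z_jZ_kZ_l$ with $k\ne l$ has some factor occurring to an odd power at most three, so only $E(Z_j^2)v_j=v_j$ survives; hence $E\{\z\,(g(\x)-E(Y))\}=\mathbf{v}$ and $\cov(\x,Y)=\bGam_0\mathbf{v}=\bSig\bb$, i.e.\ $\bb=\bSig^{-1}\cov(\x,Y)$. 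For the second, I would check that the linear part contributes nothing (again by vanishing odd moments), that the off-diagonal quadratic part contributes $2M_{ij}$ to entry $(i,j)$ for $i\ne j$ and $0$ to the diagonal, and that the diagonal quadratic part contributes $M_{ii}\{{\rm E}(Z_i^4)-1\}=M_{ii}(\Delta-1)$ to entry $(i,i)$ and $0$ off-diagonal. Assembling,
\[
E\{\z\z\trans(g(\x)-E(Y))\}=2\mathbf{M}+(\Delta-3)\,\diag(\mathbf{M}),
\]
where $\diag(\mathbf{M})$ is the diagonal matrix formed from the diagonal of $\mathbf{M}$. Transforming back and using $\bGam_0\mathbf{M}\bGam_0\trans=\bSig\bOme\bSig$ gives $\bLam_y=2\bSig\bOme\bSig+(\Delta-3)\,\bGam_0\diag(\mathbf{M})\bGam_0\trans$.

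To finish, either assumption kills the correction term: under (C1) $\Delta-3=0$, and under (C2) $\diag(\mathbf{M})=\diag(\bGam_0\trans\bOme\bGam_0)=\mathbf{0}$. In both cases $\bLam_y=2\bSig\bOme\bSig$, so $\bOme=\bSig^{-1}\bLam_y\bSig^{-1}/2$. The same computation covers $\bLam_r$: since $E(r\mid\x)=g(\x)-E(Y)-(\x-\u)\trans\bb$ merely deletes the linear term $\z\trans\mathbf{v}$, which we already saw does not enter the second moment, one gets $\bLam_r=\bLam_y$, so the formula for $\bOme$ holds with $\bLam$ equal to either. For the intercept, $\tr(\bSig^{-1}\bLam)=\tr(\bSig^{-1}\cdot 2\bSig\bOme\bSig)=2\tr(\bOme\bSig)=2\tr(\mathbf{M})$, whence $E(Y)-\tr(\bSig^{-1}\bLam)/2=\alpha+\tr(\mathbf{M})-\tr(\mathbf{M})=\alpha$.

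The only real obstacle is the bookkeeping in the fourth-moment step: one must enumerate how the four indices in ${\rm E}(Z_iZ_jZ_kZ_l)$ can coincide and cleanly separate the ``all four equal'' case (producing $\Delta$) from the ``two disjoint pairs'' case (producing $1$). Conditions (C1) and (C2) exist precisely to neutralize the resulting discrepancy $\Delta-3$, which lives entirely on the diagonal of $E\{\z\z\trans(g(\x)-E(Y))\}$; isolating that diagonal term is the crux, and everything else is elementary. I would also remark that although the heuristic derivation used Stein's Lemma and normality, the proof itself needs only the stated moments of $\z$, which is why (C1) with $\Delta=3$ — a property shared by, but not characterizing, the Gaussian — already suffices.
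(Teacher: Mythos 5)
Your proposal is correct and follows essentially the same route as the paper's Appendix C: a direct moment computation in the latent coordinates $\z$, expanding $E(Y\mid\x)$ and using ${\rm E}(Z_k)={\rm E}(Z_k^3)=0$, ${\rm E}(Z_k^2)=1$, ${\rm E}(Z_k^4)=\Delta$ to get $\cov(\x,Y)=\bSig\bb$ and $\bLam_y=2\bSig\bOme\bSig+(\Delta-3)\bGam_0\diag(\bGam_0\trans\bOme\bGam_0)\bGam_0\trans$. Your sign on the $(\Delta-3)$ term is in fact the correct one (the paper's displayed computation has a harmless sign slip, immaterial since the term vanishes under (C1) or (C2)), and you additionally spell out the $\alpha$ identity and the $\bLam_r=\bLam_y$ equality, which the paper's appendix leaves implicit.
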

The factor model was widely assumed in  random matrix theory  \citep{bai1996effect} and high dimensional inference \citep{chen2010tests} where higher order moment assumptions   of $\x$ are quite often required. The moment conditions on $\z$  play an important role to derive  an explicit form  for $\bOme$. Condition (C1) is satisfied if $\x$ is normal. When $\bGam_0 = \I_{p\times p}$, condition (C2) implicitly requires the absence of quadratic terms of the form $X_k^2$ in model (\ref{model}), i.e.,
\beqrs
\E(Y\mid\x) = \alpha + \x \trans \bb + \sum_{i \neq j} \bOme_{i,j} X_i X_j,
\eeqrs
where $X_1,\cdots,X_p$ are i.i.d covariates. 

We provide two explicit forms for estimating $\bOme$, one is based on the response $Y$ and the other is based on the residual  $r$.  The difference between $\bLam_y$ and $\bLam_r$ is that   we remove the main effects in  $\bLam_r$, or equivalently, the linear trend  in model (\ref{model}), before we estimate the interactions $\bOme$. It is thus natural to expect that the residual-based $\bLam_r$ is superior to the response-based $\bLam_y$ in that the sample estimate of $\bLam_r$ has smaller variabilities than that of $\bLam_y$ \citep{cheng2017relative}. In effect, we can replace $\bb$ with an arbitrary  $\wt\bb\in\mR^p$, which yields that $\wt r  \defby Y - E(Y) - (\x-\u)\trans\wt\bb$. Similarly, we can define 
$\bLam_{\wt r} \defby E\left\{\wt r (\x-\u)(\x-\u)\trans\right\}$. Under the normality assumption, $\x$ is symmetric about $\u$ and hence  $\bLam_r = \bLam_{\wt r}$. This ensures that, to estimate  $\bOme$ accurately, our proposal does not hinge  on the sparsity of  main effects because we do not require $\bb$ to be estimated consistently. Even if the main effects are not sufficiently sparse or are not estimated very accurately, we can either directly  use the response-based method $\bSig^{-1}\bLam_y\bSig^{-1}$, or  the  lousy residual-based method $\bSig^{-1}\bLam_{\wt r}\bSig^{-1}$  which utilizes   a lousy residual $\wt r  = Y - E(Y) - (\x-\u)\trans\wt\bb$ and $\wt \bb$ can be a lousy estimate of $\bb$. In effect $\bLam_y$ equals $\bLam_{\wt r}$ by setting $\wt \bb = {\bf0}_{p\times 1}$ in $\wt r$. This makes our proposal  quite different from existing procedures which assume  the heredity conditions and hence require  to estimate the main effects accurately in order to recover the interactions.  By contrast, our proposal does not require to estimate the main effects precisely. We will illustrate this phenomenon  through simulation studies in Section 3.

\subsection{Interaction Estimation}
We show that both  $\bb$ and $\bOme$ have explicit forms under moment conditions in Section 2.1. In particular,   $\bb = \bSig^{-1}\cov(\x,Y)$ and $\bOme = \bSig^{-1}\bLam\bSig^{-1}/2$ for $\bLam$ being $\bLam_y$ or $\bLam_r$. In this subsection, we discuss how to estimate $\bSig^{-1}\cov(\x,Y)$ and $\bSig^{-1}\bLam\bSig^{-1}$ at the sample level.  Estimating $\bSig^{-1}\cov(\x,Y)$  is indeed straightforward by noting that it is a solution to the minimization problem  $$\underset{\tiny \b}{\argmin}~\E\{Y- \E(Y) - (\x-\u)\trans\b\}^2.$$ Therefore,  we can simply estimate $\bSig^{-1}\cov(\x,Y)$ with the penalized least squares  by regressing $\{Y- \E(Y)\}$ on the ultrahigh dimensional covariates $(\x - \u)$ linearly. We do not give  many details about how to estimate $\bSig^{-1}\cov(\x,Y)$  because the penalized least squares estimation has already been well documented \citep{tibshirani1996regression,fan2001variable}. Throughout our numerical studies we use the LASSO \citep{tibshirani1996regression} to estimate $\bb$. The resulting solution is denoted  by $\wh\bb$.

In what follows we concentrate on how to estimate $\bSig^{-1}\bLam\bSig^{-1}/2$, where $\bLam$ can be $\bLam_y$ or $\bLam_r$. For an arbitrary matrix $\B = (\B_{k,l})_{p\times p}$, we have
\beqrs
\bOme&=&\underset{\tiny\B}{\argmin} \Big[\tr(\B-\bSig^{-1}\bLam\bSig^{-1}\big/2)\trans (\B-\bSig^{-1}\bLam\bSig^{-1}\big/2) \Big]\\ &=&\underset{\tiny\B}{\argmin} \Big[\tr(\B-\bSig^{-1}\bLam\bSig^{-1}\big/2)\trans\bSig (\B-\bSig^{-1}\bLam\bSig^{-1}\big/2)\bSig \Big], 
\eeqrs
and 
\beqrs
&\tr(\B-\bSig^{-1}\bLam\bSig^{-1}\big/2)\trans\bSig (\B-\bSig^{-1}\bLam\bSig^{-1}\big/2)\bSig \\
&= \tr(\B \trans \bSig \B \bSig) - \tr(\B\bLam) + \tr(\bSig^{-2}\bLam^2)/4.
\eeqrs
Ignoring the constant, the term $ \tr(\B \trans \bSig \B \bSig) - \tr(\B\bLam)$ quantifies the distance between $\B$  and $\bSig^{-1}\bLam\bSig^{-1}\big/2$. Therefore, to seek a $p\times p$ matrix $\B$ which can  approximate $\bSig^{-1}\bLam\bSig^{-1}\big/2$ very well, it suffices to consider  the following minimization problem
\beqrs
\underset{\tiny\B}{\argmin}\Big[\tr(\B \trans \bSig \B \bSig) - \tr(\B\bLam) \Big],
\eeqrs
as long as we have faithful estimates of $\bSig$ and $\bLam$. The above loss function of matrix form is convex which guarantees that local minimum must be a global minimum.

To construct faithful estimates for  $\bSig$ and $\bLam$,  suppose $\{(\x_i,Y_i),i=1,\ldots,n\}$ is a  random sample of $(\x,Y)$. Denote
\beqrs
&&\overline{\x} \defby n^{-1}\sum_{i=1}^n \x_i,~\overline{Y} \defby n^{-1}\sum_{i=1}^n Y_i,~\hSig \defby
n^{-1}\sum_{i=1}^n \left(\x_i - \overline{\x}\right) \left(\x_i - \overline{\x}\right)\trans, 
\\
&&\wh\bLam = \wh\bLam_y \textrm{ or } \wh\bLam_r, \\
&&\hLam_y \defby
n^{-1}\sum_{i=1}^n (Y_i-\overline{Y} )\left(\x_i - \overline{\x}\right) \left(\x_i - \overline{\x}\right)\trans, \textrm{ and } \\
&&\hLam_r \defby
n^{-1}\sum_{i=1}^n \wh r_i\left(\x_i - \overline{\x}\right) \left(\x_i - \overline{\x}\right)\trans,\eeqrs
where $\wh r_i \defby Y_i - \overline{Y} - (\x_i-\overline{\x})\trans\wh\bb$.  We  propose the following penalized interaction estimation (PIE) to estimate $\bOme$, for $\wh \bLam$ being $\wh \bLam_y$ or $\wh \bLam_r$: 
\beqr\label{pie}
\textrm{\ PIE:\ }~~\hOme=\argmin_{{\tiny\B} \in \mR^{p \times p}} \tr (\B \trans \hSig \B \hSig)-\tr (\B \hLam)+\lambda_{n} \|\B\|_1,
\eeqr
where  $\lambda_{n}$ is a tuning parameter  and  $\|\B\|_{1} \defby \sum\limits_{k=1}^p\sum\limits_{l=1}^p|\B_{k,l}|$.  To ease subsequent illustration, we further define the following two notations:
\beqr\label{piey}
\textrm{\ PIE$_y$:\ }~~\hOme_y&=&\argmin_{{\tiny\B} \in \mR^{p \times p}} \tr(\B \trans \hSig \B \hSig)-\tr (\B \hLam_y)+\lambda_{1n} \|\B\|_1, \textrm{ and }\\\label{pier}
\textrm{\ PIE$_r$:\ }~~\hOme_r&=&\argmin_{{\tiny\B} \in \mR^{p \times p}} \tr (\B \trans \hSig \B \hSig)-\tr (\B \hLam_r)+\lambda_{2n} \|\B\|_1.
\eeqr

\subsection{Implementation} \label{algo}
In this section we discuss how to solve (\ref{pie}) which includes (\ref{piey}) and (\ref{pier}) as special cases. 
Making use of the matrix structure of \eqref{pie}, we next develop an efficient algorithm using the Alternating Direction Method of Multipliers  \citep[ADMM]{boyd2011distributed}. We rewrite the optimization problem in \eqref{pie} as 
\beqr
\min_{\B\in \mR^{p\times p}}  \tr (\B \trans \hSig \B \hSig)-\tr (\B \hLam)+\lambda_n \| \bPsi\|_1, \textrm{ such that } ~\bPsi=\B, \label{admm0}
\eeqr
which motivates us to form the augmented Lagrangian as
\beqr\label{admm}
L(\B,\bPsi,\L)&=&\tr (\B \trans \hSig \B \hSig)-\tr (\B \hLam)+\lambda_n \| \bPsi\|_1\\
&&+
\tr \left\{\L (\B-\bPsi)\right\}+ (\rho/2)\|\B-\bPsi\|_F^2, \nonumber
\eeqr
where $\rho$ is a step size parameter in the ADMM algorithm, and $\|\A\|_F \defby \{\tr(\A\trans\A)\}^{1/2}$ stands for the Frobenius norm of $\A$. Given the current estimate $(\B^k, \bPsi^k, \L^k)$, the augmented Lagrangian \eqref{admm} can   be solved by successively updating $(\B, \bPsi, \L)$ by:
\beqr 
&\textrm{The }\B\textrm{ step: \ \ \ }	\B^{k+1}=&\argmin_{\B \in \mR^{p \times p}}L(\B,\bPsi^k,\L^k),\label{AlgoB} \\&\textrm{The }
\bPsi\textrm{ step: \ \ \ }	\bPsi^{k+1}=&\argmin_{{\small\bPsi} \in \mR^{p \times p}}L(\B^{k+1},\bPsi,\L^k),\label{Algo:Psi}\\  
&\textrm{The }
\L\textrm{ step: \ \ \ }	\L^{k+1}=&\L^k+{\rho}(\B^{k+1}-\bPsi^{k+1}).\label{Algo:L}
\eeqr 
Define the elementwise soft thresholding  operator $\mbox{soft}(\A,\lambda) \defby \{\max(\A_{k,l}-\lambda,0)\}_{p\times p}$. For the $\bPsi$ step, given $\B^{k+1}$,  $\L^k$, $\rho$ and $\lambda_n$, the solution is then given by 
\beqrs
\bPsi^{k+1}\defby\mbox{soft}(\B^{k+1}+ \rho^{-1} \L^{k}, \lambda_n/\rho).\label{eq:UpdatePsi}
\eeqrs

The $\B$ step amounts to solving the equation
\beqr \label{Algo:B}
2 \hSig \B^{k+1} \hSig+\rho \B^{k+1}=\bLam^k,
\eeqr
where $\bLam^k\defby \hLam-\L^k+\rho\bPsi^k$. We make the singular value decomposition to obtain $\hSig=\U \D_0 \U \trans$, where $\U \in \mR^{p \times m}$, $m=\min(n,p)$ and $\D_0\defby \mbox{diag}(d_1,\cdots,d_m)$ is a diagonal matrix.  Define $\D\defby (\D_{k,l})_{p \times p}$, where $\D_{k,l}\defby 2 d_kd_l/(2d_k d_l+\rho)$. Given $\bPsi^k$, $\L^k$ and $ \rho$, the solution to (\ref{Algo:B}) is given by
\beqrs
\B^{k+1}= \rho^{-1}\bLam^k-\rho^{-1} \U \{\D \circ (\U\trans \bLam^k\U)\}\U\trans.  \label{eq:UpdateB}
\eeqrs
where $\circ$ denotes the Hadamard product. 

Details of the algorithm is summarized in Algorithm 1. This algorithm yields a symmetric estimate of $\bOme$, which is denoted by $\hOme$.   The computational complexity of each iteration is no more than O$\{\min(n,p)  p^2\}$ and  the memory requirement is no more than O$(p^2)$ since we only need to store a few $p \times p$ or $p \times \min(n,p)$ matrices in computer memory.
The algorithm explores the advantages of matrix multiplications and is efficient  in  memory storage and computation cost and hence is  appealing for high dimensional quadratic regression. 
\begin{algorithm}
	\caption{Alternating Direction Method of Multipliers (ADMM) for solving \eqref{pie}}
	\begin{algorithmic}[1]
		
		\item[Initialization:]
		\State Input $\{(\x_i,Y_i),i=1,\cdots,n\}$, the tuning parameter $\lambda_n$ and the step size $\rho$;
		\State Calculate $\hLam$ and the singular value decomposition of the centered design matrix $(\x_1-\overline{\x},\cdots,\x_n-\overline{\x})_{p \times n}$ to get $\hSig=\U \D_0 \U \trans$ where $\U \in \mR^{p \times m},~\D_0=\mbox{diag}\{d_1,\dots,d_m\}$ and $m=\min(n,p)$; 
		\State  Define $\D\defby (\D_{k,l})_{m \times m}$ where $\D_{k,l}=2d_kd_l/(2d_k d_l+\rho)$;
		\State Start from $k=0$, $\L^0=\textbf{0}_{p \times p},\B^0=\textbf{0}_{p \times p}$.
		
		\item[Iteration:] 	
		\State Define $\bLam^{k}\defby \hLam-\L^{k}+\rho \B^{k}$. Update  $\B^{k+1}= \rho^{-1}\bLam^k-\rho^{-1} \U \{\D \circ (\U\trans \bLam^k\U)\}\U\trans$;
		\State Update $\bPsi^{k+1}\defby \mbox{soft}(\B^{k+1}+ \rho^{-1} \L^{k}, \lambda_n/\rho)$;
		\State Update $\L^{k+1}\defby\L^{k}+\rho(\B^{k+1}-\bPsi^{k+1})$;
		\State Update $k=k+1$;
		\State Repeat step 5 through step 8 until convergence.
		\item[Output:] $\hOme=\B^{k+1}$.
	\end{algorithmic}
\end{algorithm}

Furthermore, as a first-order method for convex problems, 
convergence analysis of the ADMM algorithm under various conditions has been well documented in the recent optimization literature. See, for example, \cite{nishihara2015general}, \cite{hong2017linear} and \cite{chen2017note}. The following lemma states that our proposed ADMM algorithm  converges linearly to zero.
\begin{lemm}\label{linearADMM}
	Given $\hSig$ and $\hLam$. Suppose that the ADMM algorithm \eqref{AlgoB}-\eqref{Algo:L} generates a   sequence of solutions $\{(\B^k,\bPsi^k, \L^k), k=1,\ldots\}$. Then  $\{(\B^k,\bPsi^k),k=1,\ldots\}$ converges linearly to the minimizer of \eqref{admm0}, and $\|\B^k-\bPsi^k\|_F$ converges linearly to zero, as $k\to\infty$. 
\end{lemm}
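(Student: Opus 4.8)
The plan is to recognize the ADMM iteration \eqref{AlgoB}--\eqref{Algo:L} as the standard two-block ADMM applied to the composite problem \eqref{admm0}, whose objective splits as $f(\B) + g(\bPsi)$ subject to $\B - \bPsi = \mathbf{0}$, with $f(\B) \defby \tr(\B\trans\hSig\B\hSig) - \tr(\B\hLam)$ and $g(\bPsi) \defby \lambda_n\|\bPsi\|_1$. The key structural fact is that $f$ is a convex quadratic form in $\vec(\B)$: writing $\vec(\B)$, the Hessian is $2(\hSig \otimes \hSig)$, which is positive semidefinite with smallest positive eigenvalue bounded below by $2 d_{\min}^2$, where $d_{\min}$ is the smallest nonzero singular value of $\hSig$. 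On the orthogonal complement of the kernel of $\hSig\otimes\hSig$, $f$ is strongly convex; moreover $f$ has Lipschitz gradient since $\hSig$ is a fixed finite matrix. The function $g$ is a polyhedral (hence ``piecewise linear-quadratic'') convex function. These are exactly the ingredients under which linear convergence of two-block ADMM has been established — see \cite{hong2017linear} and \cite{nishihara2015general}, and the clarifying discussion in \cite{chen2017note}.

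The steps I would carry out: (i) Cast \eqref{admm0} in the canonical ADMM form $\min f(\B) + g(\bPsi)$ s.t. $A\B + B\bPsi = c$ with $A = \I$, $B = -\I$, $c = \mathbf{0}$; verify that a minimizer exists (the objective of \eqref{pie} is coercive in $\B$ after adding $\lambda_n\|\B\|_1$, since $\ell_1$ controls the directions where the quadratic $f$ is flat — indeed $\hSig\otimes\hSig$ has nontrivial kernel only when $p>n$, and on that kernel $\|\B\|_1>0$) and that strong duality holds (both functions are closed proper convex, constraints are linear, so Slater is trivial). (ii) Invoke the linear-convergence theorem for ADMM: it suffices that one of $f,g$ be strongly convex and smooth on the relevant subspace, or, in the more refined results of \cite{hong2017linear}, that $f$ be a convex quadratic and $g$ be polyhedral — which is our exact situation. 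This yields a contraction in a suitable metric: there is $\delta \in (0,1)$ with $\|(\B^{k+1},\bPsi^{k+1},\L^{k+1}) - (\B^\star,\bPsi^\star,\L^\star)\|_G \le \delta\,\|(\B^k,\bPsi^k,\L^k) - (\B^\star,\bPsi^\star,\L^\star)\|_G$ for an appropriate positive semidefinite $G$, where $(\B^\star,\bPsi^\star,\L^\star)$ is a (the) KKT point. (iii) Deduce $Q$-linear, hence $R$-linear, convergence of $\{(\B^k,\bPsi^k)\}$ to the minimizer $(\bOme,\bOme)$ of \eqref{admm0}; and since $\B^k - \bPsi^k = \rho^{-1}(\L^k - \L^{k-1})$ from the dual update \eqref{Algo:L}, and $\L^k$ converges linearly, $\|\B^k - \bPsi^k\|_F$ converges linearly to zero.

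I expect the main obstacle to be a somewhat delicate point rather than a deep one: when $p > n$ the quadratic $f$ is only \emph{positive semidefinite}, not positive definite, so the off-the-shelf ``strong convexity of one block'' hypothesis fails on the nose and one must use the sharper results (e.g. \cite{hong2017linear}) that exploit the quadratic-plus-polyhedral structure — or, alternatively, argue that the constraint $\B = \bPsi$ together with the subdifferential of $\lambda_n\|\cdot\|_1$ restores the needed regularity (an ``error bound'' / metric subregularity condition holds for polyhedral problems, which is what actually drives the linear rate). The secondary obstacle is bookkeeping: matching our block roles ($\B$ updated first, then $\bPsi$, then the scaled-versus-unscaled multiplier $\L$) to whichever convention the cited theorem uses, and checking that uniqueness of $\bPsi^\star = \bOme$ (needed for a clean statement) follows from uniqueness of the solution to \eqref{pie}, which in turn holds because adding $\lambda_n\|\B\|_1$ to the convex quadratic $f$ gives a strictly convex objective on exactly the subspace where $f$ is flat, hence a unique global minimizer. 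None of these require new computation; they are verifications that the hypotheses of an existing linear-convergence theorem are met.
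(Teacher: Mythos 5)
Your proposal is correct and follows essentially the same route as the paper: the paper's proof likewise casts \eqref{admm0} as a two-block ADMM with first block a convex quadratic (written as a strongly convex, Lipschitz-gradient function $\|\cdot\|^2$ composed with the linear map induced by $(\hSig\otimes\hSig)^{1/2}$, plus a linear term) and second block the polyhedral function $\lambda_n\|\cdot\|_1$, then invokes Theorem 3.1 of \cite{hong2017linear}. Your identification of the $p>n$ rank-deficiency of $\hSig\otimes\hSig$ as the reason one needs the quadratic-plus-polyhedral (error-bound) version of the linear-rate theorem, rather than plain strong convexity of the block, is exactly the point the paper's decomposition is designed to handle.
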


It remains to choose an appropriate tuning parameter for PIE$_y$ or PIE$_r$. Motivated by LARS–OLS hybrid \citep{efron2004least}, we use PIE to find the model but not to estimate the coefficients. For a given $\lambda_{n}$, we fit a least squares model on the support of $\hOme$ estimated by PIE$_y$ or PIE$_r$ and get the residual sum of squares.  We then choose $\lambda_{n}$ by the Bayesian information criterion (BIC).  Our limited experience indicates that this procedure is very fast and effective.  

\subsection{Asymptotic Properties}
Suppose $\bOme = (\bOme_{k,l})_{p\times p}$ is a sparse matrix. For notational clarity, we denote the  support of $\bOme$ by  $\supp\defby\{(k,l): \bOme_{k,l}\neq 0\}$,  the complement of $\supp$ by $\supp^c$, and the cardinality of $\calS$ by $s_p\defby \|\bOme\|_0$.  Similarly, we denote by   $\wh\supp_y$ and $\wh\supp_r$ the respective support of $\hOme_y$ and   $\hOme_r$,   and 
$\wh\supp_y^c$ and $\wh\supp_r^c$ the respective complement of $\wh\supp_y$ and $\wh\supp_r$.   We define
$\|\A\|_F \defby \{\tr(\A\trans\A)\}^{1/2}$,  $\|\A\| \defby \sup\limits_{\a\trans\a=1} (\a\trans\A\trans\A\a)^{1/2} = \lambda_{\max}^{1/2}(\A\trans\A)$,
$\|\A\|_{\infty} \defby \max\limits_{1\le k,l\le p}|\A_{k,l}|$ and $\|\A\|_{L} \defby\max\limits_{1\le k \le q}\sum\limits_{l=1}^q|\A_{k,l}|$, for $\A = (\A_{k,l})_{p\times p}$.
We further define $\bGam_0\defby \bSig\otimes \bSig$, $M\defby\|\bGam^{-1}_{\supp,\supp}\|_L$ and $\kappa\defby 1-\|\bGam_{\supp^c,\supp}\bGam_{\supp,\supp}^{-1}\|_L$.
Denote $c_0,C_0, c_1, C_1, \ldots,$   a sequence of generic constants which may take different values at various places.  We assume the following regularity conditions to study the asymptotic properties of $\wh\bOme_y$ and $\wh\bOme_r$.

\begin{enumerate}
	\item[(A1):] Assume $c_0^{-1}\le \lambda_{\min}(\bSig)\le \lambda_{\max}(\bSig)\le c_0$, where $\lambda_{\min}(\bSig)$ and $\lambda_{\max}(\bSig)$ are the respective smallest and largest eigenvalues of $\bSig$.
	\item[(A2):] Assume $X_k$s are sub-Gaussian, i.e.,  $\E\{ \exp (c_0 |\e\trans\x |^2) \}\le C_0 < \infty$ for any unit-length vector $\e$.
	
	\item[(A3)] Assume $\E \{\exp (c_1 |Y|^\alpha)\}\le C_1 < \infty$ for some $0<\alpha \leq 2$.
	\item[(A4)] Assume the irrepresentability condition holds, i.e., $\kappa > 0$.
	\item [(A5)] Assume $\x$ is symmetric about $\u$.
\end{enumerate}
Conditions (A1) and (A2) are widely assumed in high dimensional data analysis. Condition (A3) is assumed to control the tail behavior of $Y$ through concentration inequalities.  The  irrepresentability condition  (A4) is nearly necessary for the consistence of $\ell_1$-penalization \citep{zhao2006model, zou2006adaptive}. This condition was first used by \cite{ravikumar2011high}. See also  \cite{zhang2014sparse} and \cite{liu2015fast}. We assume condition (A5) to ensure  the consistency of residual-based approaches.

\begin{theo}\label{thm1}
	Let  $\lambda_{1n}\defby c_1   \{n^{-\alpha/(\alpha+1)}\log(p)\}^{1/2}$ for sufficiently large $c_1$ and assume that $s_p \{n^{-1}\log (p)\}^{1/2}\rightarrow 0$. Under the conditions (A1)-(A4), we have
	\begin{itemize}
		\item[(i)] ${\rm pr}\big(\wh\supp_y^c=\supp^c\big)= 1-O(p^{-1})$.
		\item[(ii)] If we further assume 
		$\min\limits_{(k,l) \in \supp}|\bOme_{k,l}|>c_2 M \lambda_{1n}$
		for sufficiently large $c_3$, then ${\rm pr} \big(\wh\supp_y= \supp \big)=1-O(p^{-1})$.
		\item[(iii)] ${\rm pr}\big(	\|\hOme_y-\bOme\|_\infty\leq c_3  \lambda_{1n}M\big) = 1-O(p^{-1})$, for sufficiently large $c_3$.
		\item[(iv)] ${\rm pr}\big(	\|\hOme_y-\bOme\|_F\leq c_4  s_p^{1/2}\lambda_{1n}M\big)= 1-O(p^{-1})$, for sufficiently large $c_4$.
	\end{itemize}
\end{theo}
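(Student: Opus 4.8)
The plan is to run the primal--dual witness (PDW) construction of \cite{ravikumar2011high}, viewing \eqref{piey} as an $\ell_1$-penalised quadratic program. Differentiating, $\hOme_y$ satisfies the stationarity condition $2\hSig\hOme_y\hSig-\hLam_y+\lambda_{1n}\widehat{\mathbf G}=\mathbf{0}$ for some subgradient $\widehat{\mathbf G}\in\partial\|\hOme_y\|_1$; the objective is invariant under $\B\mapsto\B\trans$, so once uniqueness is established the minimiser is symmetric. Vectorising and writing $\bGam\defby\bSig\otimes\bSig$, $\hGam\defby\hSig\otimes\hSig$, this reads $2\hGam\,\vec(\hOme_y)-\vec(\hLam_y)+\lambda_{1n}\,\vec(\widehat{\mathbf G})=\mathbf{0}$, while Proposition~\ref{prop1} supplies the exact population identity $2\bSig\bOme\bSig=\bLam_y$, i.e.\ $2\bGam\,\vec(\bOme)=\vec(\bLam_y)$. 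The PDW then: (a) solves \eqref{piey} with all entries outside $\supp$ held at zero, producing the zero-padded candidate $\tOme$ and recording $\widehat{\mathbf G}_\supp\in\partial\|\tOme_\supp\|_1$; (b) reads $\widehat{\mathbf G}_{\supp^c}$ off the $\supp^c$-rows of the full stationarity equation; and (c) verifies strict dual feasibility $\|\widehat{\mathbf G}_{\supp^c}\|_\infty<1$, which certifies that $\tOme$ is the unique minimiser of \eqref{piey} and hence that $\wh\supp_y\subseteq\supp$, giving part~(i).

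Subtracting the population identity from the restricted stationarity equation yields $2\hGam_{\supp,\supp}\{\vec(\tOme_\supp)-\vec(\bOme_\supp)\}=\vec(\hLam_y-\bLam_y)_\supp-2\{(\hGam-\bGam)\vec(\bOme)\}_\supp-\lambda_{1n}\widehat{\mathbf G}_\supp$, and inserting this into the $\supp^c$-block (using the $\supp^c$-restriction of the population identity) expresses $\lambda_{1n}\widehat{\mathbf G}_{\supp^c}$ through the same ingredients plus $\lambda_{1n}\hGam_{\supp^c,\supp}\hGam_{\supp,\supp}^{-1}\widehat{\mathbf G}_\supp$. Thus everything reduces to three statements, each holding on an event of probability $1-O(p^{-1})$: (I) $\|\hSig-\bSig\|_\infty\le C\{n^{-1}\log p\}^{1/2}$, standard under the sub-Gaussian assumption~(A2); (II) $\|\hLam_y-\bLam_y\|_\infty\le C\{n^{-\alpha/(\alpha+1)}\log p\}^{1/2}$, of the order of $\lambda_{1n}$, where each entry of $\hLam_y$ is an average of terms $\{Y_i-\overline Y\}$ times a product of two centred covariates whose tails are governed by (A2)--(A3); and (III), using (A1) to make $\bGam$ well conditioned together with $s_p\{n^{-1}\log p\}^{1/2}\to0$ and $\|\hGam_{\mathcal A,\mathcal B}-\bGam_{\mathcal A,\mathcal B}\|_\infty\le C\|\hSig-\bSig\|_\infty$, a matrix-perturbation argument (as in \cite{ravikumar2011high}) giving $\|\hGam_{\supp,\supp}^{-1}\|_L\le2M$ and, via the irrepresentability condition~(A4), $\|\hGam_{\supp^c,\supp}\hGam_{\supp,\supp}^{-1}\|_L\le1-\kappa/2$. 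One further needs the bilinear remainder bound $\|(\hGam-\bGam)\vec(\bOme)\|_\infty=O(\lambda_{1n})$, obtained by writing $\hSig\bOme\hSig-\bSig\bOme\bSig$ as $(\hSig-\bSig)\bOme\bSig+\bSig\bOme(\hSig-\bSig)+(\hSig-\bSig)\bOme(\hSig-\bSig)$ and controlling each term entrywise via the sparsity of $\bOme$ (here $s_p\{n^{-1}\log p\}^{1/2}\to0$ and $\alpha\le2$ enter).

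Granting these, the conclusions follow. From the restricted identity, $\|\tOme_\supp-\bOme_\supp\|_\infty\le\|\hGam_{\supp,\supp}^{-1}\|_L\bigl(\tfrac12\|\hLam_y-\bLam_y\|_\infty+\|(\hGam-\bGam)\vec(\bOme)\|_\infty+\tfrac12\lambda_{1n}\bigr)\le c_3M\lambda_{1n}$, and since $\tOme=\hOme_y$ and the error vanishes off $\supp$ this is part~(iii). Feeding this into the $\supp^c$-block gives $\|\widehat{\mathbf G}_{\supp^c}\|_\infty\le1-\kappa/2+(2-\kappa/2)\lambda_{1n}^{-1}\bigl(\|\hLam_y-\bLam_y\|_\infty+2\|(\hGam-\bGam)\vec(\bOme)\|_\infty\bigr)<1$ once $c_1$ in $\lambda_{1n}=c_1\{n^{-\alpha/(\alpha+1)}\log p\}^{1/2}$ is large enough, which is strict dual feasibility and hence part~(i). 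Part~(ii) then follows at once: if $\min_{(k,l)\in\supp}|\bOme_{k,l}|>c_2M\lambda_{1n}$ with $c_2$ larger than the constant of part~(iii), no entry of $\hOme_y$ on $\supp$ can vanish, so $\supp\subseteq\wh\supp_y$ and with part~(i) $\wh\supp_y=\supp$. Part~(iv) is immediate from part~(i): $\hOme_y-\bOme$ is supported on the $s_p$ indices of $\supp$, so $\|\hOme_y-\bOme\|_F\le s_p^{1/2}\|\hOme_y-\bOme\|_\infty\le c_4s_p^{1/2}M\lambda_{1n}$.

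The main obstacle is statement~(II): the entries of $\hLam_y-\bLam_y$ are centred averages of products of the heavy-tailed response $Y$ (with the tail bound of~(A3)) and a sub-exponential factor built from two centred sub-Gaussian covariates, to which an off-the-shelf sub-exponential Bernstein inequality does not directly apply. I would instead truncate $Y$ at a level calibrated to $\alpha$, bound the truncation bias and the variance of the truncated average separately, and optimise the truncation level, which produces exactly the exponent $\alpha/(\alpha+1)$; the empirical centering by $\overline\x$ and $\overline Y$ is a lower-order correction. A secondary, routine difficulty is carrying $M$ and $\kappa$ through the perturbation estimates, harmless here since they are treated as bounded away from $0$ and $\infty$ and $s_p\{n^{-1}\log p\}^{1/2}\to0$ dominates the perturbations.
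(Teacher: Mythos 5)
Your proposal is correct and follows essentially the same route as the paper: vectorize via $\bGam=\bSig\otimes\bSig$, run a primal--dual witness argument for the $\ell_1$-penalized quadratic program under the irrepresentability condition (A4), and reduce everything to the three concentration bounds on $\|\hSig-\bSig\|_\infty$, $\|\hLam_y-\bLam_y\|_\infty$ and the bilinear remainder $(\hSig-\bSig)\bOme\bSig+\bSig\bOme(\hSig-\bSig)+(\hSig-\bSig)\bOme(\hSig-\bSig)$, exactly as in the paper's Lemmas 3--5 and Appendix D. The only cosmetic difference is that you invert the empirical block $\hGam_{\supp,\supp}$ and control it by perturbation, whereas the paper's Lemma 5 keeps the population inverse $\bGam_{\supp,\supp}^{-1}$ and absorbs the perturbation into the condition on $\lambda$; your truncation route for the $\alpha/(\alpha+1)$ rate matches the sub-Weibull concentration the paper imports as Lemma 2 of Appendix A.
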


Theorem \ref{thm1} shows that,  as long as the signal strength of the interactions is not too small, our proposal can identify the support correctly with a very high probability. In other words, $\hOme_y$ is asymptotically selection consistent. Theorem \ref{thm1} also shows that $\hOme_y$ is a  consistent estimate of $\bOme$ under both the infinity norm and the Frobenius norm.

\begin{theo}\label{thm2}
	Let $\lambda_{2n}\defby c_5 \{n^{-\alpha/(\alpha+1)}\log(p)\}^{1/2}+c_5 \|\wh \bb-\bb\|_1 \{\log(p)/n\}^{1/2}$ for sufficiently large $c_5$ and assume that $s_p \{n^{-1}\log (p)\}^{1/2}\rightarrow 0$. Under the conditions (A1)-(A5), we have
	\begin{itemize}
		\item[(i)] ${\rm pr}\big(\wh\supp_r^c=\supp^c\big)= 1-O(p^{-1})$.
		\item[(ii)] If we further assume 
		$\min\limits_{(k,l) \in \supp}|\bOme_{k,l}|>c_6 M \lambda_{2n}$
		for sufficiently large $c_6$, then ${\rm pr} \big(\wh\supp_r= \supp \big)=1-O(p^{-1})$.
		\item[(iii)] ${\rm pr}\big(	\|\hOme_r-\bOme\|_\infty\leq c_7  \lambda_{2n}M\big)= 1-O(p^{-1})$, for sufficiently large $c_7$.
		\item[(iv)] ${\rm pr}\big(	\|\hOme_r-\bOme\|_F\leq c_8  s_p^{1/2}\lambda_{2n}M\big)= 1-O(p^{-1})$, for sufficiently large $c_8$.
	\end{itemize}
\end{theo}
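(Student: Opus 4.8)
The plan is to reduce Theorem~\ref{thm2} to Theorem~\ref{thm1}, the only new ingredient being a sampling-error bound for $\hLam_r$ that accommodates the linear-estimation error $\wh\bb-\bb$.

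First I would note that PIE$_r$ has the same population target as PIE$_y$. By Proposition~\ref{prop1}, $2\bSig\bOme\bSig=\bLam_r$, so $\bOme$ minimizes the population objective $\tr(\B\trans\bSig\B\bSig)-\tr(\B\bLam_r)$; moreover, under the symmetry condition (A5), $\bLam_r$ and $\bLam_y$ coincide, their difference being $-\E[\{(\x-\u)\trans\bb\}(\x-\u)(\x-\u)\trans]$, an odd function of $\x-\u$. Consequently the entire deterministic reduction underlying the proof of Theorem~\ref{thm1} — the primal--dual witness construction on $\supp$, strict dual feasibility on $\supp^c$ through the irrepresentability constant $\kappa$ of (A4), and the ensuing $\ell_\infty$, support-recovery and Frobenius bounds expressed in terms of $M$ and $\lambda_n$ — carries over verbatim, provided one supplies (a) the same entrywise and spectral control of $\hSig$ about $\bSig$, which is unchanged from Theorem~\ref{thm1}, and (b) a high-probability bound of the form $\|\hLam_r-\bLam\|_\infty=O(\lambda_{2n})$, where $\bLam\defby\bLam_r=\bLam_y$.

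The heart of the argument is therefore (b). I would decompose
\[
\hLam_r=\hLam_r^{*}-\Delta_n,\qquad \Delta_n\defby n^{-1}\sum_{i=1}^n\big\{(\x_i-\overline{\x})\trans(\wh\bb-\bb)\big\}(\x_i-\overline{\x})(\x_i-\overline{\x})\trans,
\]
where $\hLam_r^{*}$ is $\hLam_r$ with $\wh\bb$ replaced by the true $\bb$. For the first term, $\E\hLam_r^{*}=\bLam_r=\bLam$ up to standard centering corrections of order $n^{-1}$, and the residual $r=(Y-\E Y)-(\x-\u)\trans\bb$ inherits a sub-Weibull$(\alpha)$ tail from (A2)--(A3); hence the same concentration inequality used for $\hLam_y$ in Theorem~\ref{thm1} gives $\|\hLam_r^{*}-\bLam\|_\infty\le c\{n^{-\alpha/(\alpha+1)}\log p\}^{1/2}$ with probability $1-O(p^{-1})$, only the constants changing. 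For the perturbation term, entrywise $(\Delta_n)_{k,l}=\sum_{m=1}^p(\wh\bb_m-\bb_m)\,\wh\mu_{mkl}$ with $\wh\mu_{mkl}\defby n^{-1}\sum_i(\x_{im}-\overline{\x}_m)(\x_{ik}-\overline{\x}_k)(\x_{il}-\overline{\x}_l)$, so by H\"older $\|\Delta_n\|_\infty\le\|\wh\bb-\bb\|_1\max_{m,k,l}|\wh\mu_{mkl}|$, which cleanly separates the data-dependent $\wh\bb$ from a design-only quantity. By (A5) every population third moment $\E[(X_m-u_m)(X_k-u_k)(X_l-u_l)]$ vanishes, so $\wh\mu_{mkl}$ is a centered average of products of three sub-Gaussian coordinates, and a Bernstein-type bound for such (sub-Weibull of order $2/3$) summands together with a union bound over the $p^3$ triples gives $\max_{m,k,l}|\wh\mu_{mkl}|\le c\{\log(p)/n\}^{1/2}$ with probability $1-O(p^{-1})$; the hypothesis $s_p\{n^{-1}\log p\}^{1/2}\to0$ more than suffices to keep this Gaussian-type term dominant. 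Combining the two pieces yields $\|\hLam_r-\bLam\|_\infty\le c[\{n^{-\alpha/(\alpha+1)}\log p\}^{1/2}+\|\wh\bb-\bb\|_1\{\log(p)/n\}^{1/2}]$, which is exactly of order $\lambda_{2n}$. Feeding this into the deterministic reduction above, with $c_5$ taken large enough that $\lambda_{2n}$ dominates the noise and satisfies the same range requirements as $\lambda_{1n}$ in Theorem~\ref{thm1}, the primal--dual witness succeeds on an event of probability $1-O(p^{-1})$ and produces conclusions (i)--(iv) with $\lambda_{1n}$ replaced by $\lambda_{2n}$ and with the same $\kappa$, $M$ and $s_p$.

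I expect the main obstacle to be the uniform control of the third-order sample moments $\wh\mu_{mkl}$ over all $p^3$ triples at the $\{\log p/n\}^{1/2}$ rate. This is precisely where (A5) is indispensable: without the vanishing of the population third moments the bound on $\|\Delta_n\|_\infty$ would only be $O(1)\cdot\|\wh\bb-\bb\|_1$, far too crude to be absorbed into $\lambda_{2n}$. Making it rigorous requires a concentration inequality for sums of products of three sample-mean-centered sub-Gaussian variables, whose Orlicz norm is of Weibull type $2/3$ and whose tails are only polynomially light; the standard route is to truncate at an appropriate level, bound the truncated part by Bernstein's inequality and the remainder by a crude moment estimate, and finally verify that recentering by $\overline{\x}$ contributes only lower-order terms. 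Once this estimate is established, the rest is a transcription of the Theorem~\ref{thm1} argument.
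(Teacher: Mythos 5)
Your proposal follows essentially the same route as the paper's proof: decompose $\hLam_r$ into the term computed with the true $\bb$ (handled exactly as $\hLam_y$, since the residual inherits the tail of $Y$) plus a perturbation bounded by $\|\wh \bb-\bb\|_1$ times the maximum of centered third-order sample moments, which have zero mean under (A5), and then feed the resulting $\|\hLam_r-\bLam\|_\infty$ bound into the Theorem \ref{thm1} machinery with $\lambda_{2n}$ in place of $\lambda_{1n}$. The only substantive difference is that you assert the $\{\log(p)/n\}^{1/2}$ rate for the third-moment maxima (matching the theorem statement and requiring the sharper truncated-Bernstein argument you sketch), whereas the paper's own concentration lemma for these sub-Weibull products of order $2/3$ delivers only $\{n^{-2/3}\log(p)\}^{1/2}$; the structure of the argument is otherwise identical.
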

Theorem \ref{thm2} shows that $\hOme_r$, as well as $\hOme_y$,   possesses both the  selection  and estimation consistency asymptotically. 
Moreover, the convergence rate of $\hOme_r$ depends on  $\wh \bb$. If $\|\wh \bb-\bb\|_1=o\{n^{1/(2 \alpha+2)}\}$, the convergence rate term involving $\wh \bb$ will be absorbed in the first term of Theorem \ref{thm2}.  In other words, unless the estimation error of $\wh \bb$ diverges faster than $n^{1/(2 \alpha+2)}$,  $\hOme_r$ and $\hOme_y$ would share the same convergence rate. 

\subsection{Connections to All-Pairs-LASSO}
For quadratic regression, a nature way is to fit LASSO model on all pairs of interactions, 
\begin{align*}
\argmin_{\alpha,\bb,\B} (2n)^{-1}(Y_i-\alpha-\x_i \trans \bb -\x_i \trans \B \x_i)^2+\lambda_n (\|\bb\|_1+ \|\B\|_1).
\end{align*}
Following \cite{bien2013lasso}, we refer to this approach as the all-pairs-LASSO.  For brevity, we assume $E(\x)=\textbf{0}$ and ignore the main effects.  Write $\z_i\defby \x_i \otimes \x_i$ and $\overline{\z}\defby n^{-1}\sum_{i=1}^n \z_i$. The all-pairs-LASSO is equivalent to
\begin{eqnarray}\label{compare1}
\argmin _{\B}&&(2n)^{-1}\sum_{i=1}^n\{(Y_i-\overline{Y})-(\z_i-\overline{\z} ) \trans \vec(\B)\}^2+\lambda_n \|\B\|_1   \\
=\argmin _{\B}&&(2n)^{-1}  \vec(\B) \trans \sum_{i=1}^n(\z_i-\overline{\z} ) (\z_i-\overline{\z} ) \trans \vec(\B)\nonumber\\
&&-n^{-1}\sum_{i=1}^n(Y_i-\overline{Y})(\z_i-\overline{\z} ) \trans \vec(\B)+\lambda_n \| \vec(\B)\|_1. \nonumber
\end{eqnarray}
where $\otimes$ denotes the Kronecker product and $\vec(\cdot)$ stands for the vectorization of a matrix. Recall that  our proposed method  can be re-expressed as 
\beqr\label{compare2}
\vec{(\hOme_y)}&=&\argmin_{\B} 2^{-1}\vec{(\B)}  \trans (2 \hSig \otimes \hSig) \vec{(\B)} \\
&&-\vec{(\hLam_y)} \trans \vec{(\B)}+\lambda_n \|\vec{(\B)}\|_1. \nonumber
\eeqr
It is straightforward to show that $n^{-1}\sum_{i=1}^n(Y_i-\overline{Y})(\z_i-\overline{\z} )=\vec{(\hLam_y)}$. Plug  this into \eqref{compare1} and compare with   \eqref{compare2}. We can see that the only difference between all-pairs-LASSO and our proposed PIEy is the first term. The all-pairs-LASSO  directly uses the sample version $n^{-1}  \sum_{i=1}^n(\z_i-\overline{\z} ) (\z_i-\overline{\z} ) \trans$  to mimic the covariance structure  $\cov(\z)$ while our method propose to use $2 \hSig \otimes \hSig$ since we have $\cov\{\z \trans \vec(\B)\}=2 \tr(\B \trans \bSig \B \bSig)$ under the moment condition of  Proposition \ref{prop1}.

Using $\hSig \otimes \hSig$ gives at least two advantages.  The first   is the computational efficiency. 
The complexity of our proposed method is  $O\{\min(n,p) p^2\}$. When $p$ is larger than $n$,  the computation complexity is linear in both $n$ and the number of parameters which is of order $p^2$. Comparing with all-pairs-LASSO, the memory our proposed method required is much less. In all-pairs-LASSO, we need to store $O(p^2) \times n$ design matrix where our methods only depends on several $p \times p$ matrices. The second advantage is on the theoretical properties. Under mild conditions, we can show that
\begin{align*}
\|\hSig \otimes \hSig-\bSig \otimes \bSig\|_\infty \approx \|\hSig-\bSig\|_\infty= O_p\{n^{-1}\log (p)\}^{1/2}.
\end{align*} 
By  Lemma 2 in Appendix,
\begin{align*}
\|n^{-1}  \sum_{i=1}^n(\z_i-\overline{\z} ) (\z_i-\overline{\z} ) \trans-\cov(\z)\|_\infty=O_p\{n^{-1/2}\log (p)\}^{1/2}.
\end{align*}
It can be  seen that using   $\hSig \otimes \hSig$ gives a better convergence rate. 
We will demonstrate these issues through simulations in the next section.

\section{SIMULATIONS}
In this section we conduct simulations to evaluate the performance of our proposal and 
to compare it with the RAMP method  \citep{hao2016model} and the all-pairs-LASSO which fits a LASSO model on all $p$ main effects and $p(p+1)/2$ interactions. By \cite{hao2016model}, RAMP will outperforms other methods such as iFOR\citep{hao2014interaction} and hierNet \citep{bien2013lasso} under heredity assumptions and hence in our simulations we only include RAMP as a representative. In what follows, we refer to the RAMP method under the strong heredity condition as ``RAMPs" and  the RAMP method under the weak heredity condition as ``RAMPw".  We also include the oracle estimate as a benchmark which assumes the main effects and the support of interactions are  known in advance. The oracle estimate simply fits the least squares estimation on the support of interactions using the truly important main effects and we denote it as ``Oracle". The RAMP method and all-pairs-LASSO are implemented by the R packages ``RAMP" and  ``glmnet" \citep{glmnet}.  The developed R package ``PIE" which implements our proposal is available online. 

To ease illustration, we denote the estimate of $\bOme$ by $\hOme$  obtained with different approaches. We evaluate the accuracy of the estimation through three criteria: the support recovery rate, denoted by ``rate",  the Frobenius loss, denoted by ``loss" and the number of interactions that are estimated as nonzero, denoted  by ``size". To be specific, the criteria are defined as follows,
\beqrs
\mbox{rate}&\defby& \sum_{l \leq k} I(\hOme_{k,l} \neq 0, \bOme_{k,l} \neq 0 )\Big/\sum_{l \leq k} I(\bOme_{k,l} \neq 0)\times100\%,\\
~~\mbox{loss}&\defby& \|\hOme-\bOme\|_2,~\mbox{and~~size}\defby\sum_{l \leq k} I(\hOme_{k,l} \neq 0).
\eeqrs
Here $I(E)$ is an indicator function which equals 1 if the random event $E$ is true and 0 otherwise. The closer the ``rate" is to one, the ``loss" is to zero and the ``size" is to the number of truly important interactions, the better performance a proposal has.

We consider the following four models. 
\begin{eqnarray}\label{m1}
Y&=&X_1+X_6+X_{10}+2X_{1} X_{6}+X^2_{6}+2X_{6} X_{10}+\varepsilon,\\\label{m2}
Y&=&X_6+2X_{1} X_{6}+X^2_{6}+2X_{6} X_{10}+\varepsilon,\\\label{m3}
Y&=&X_1+X_2+2X_{1} X_{6}+X^2_{6}+2X_{6} X_{10}+\varepsilon,\\\label{m4}
Y&=&2X_{1} X_{6}+X^2_{6}+2X_{6} X_{10}+\varepsilon.
\end{eqnarray}
The strong heredity condition holds in model (\ref{m1}) and the weak heredity condition holds in model (\ref{m2}), respectively. Neither the strong nor the weak heredity condition holds in model  (\ref{m3}) or (\ref{m4}). In particular, model (\ref{m4}) is a pure interaction model. We replicate each scenario 100 times to evaluate the performance of different proposals.

\subsection{Estimation Accuracy} \label{sub1}
We draw $\x$ independently from ${\cal N}({\bf0}_{p\times 1},\bSig)$ where $\bSig$ is the power decay covariance matrix $(0.5^{|k-l|})_{p \times p}$ and generate an independent error $\varepsilon$ from ${\cal N}(0,1)$.  We set the sample size $n=200$ and the  dimension $p=100$ or $p=200$.

The simulation results are charted in Tables \ref{tab1}.  We can observe that our proposal has a stable performance across almost all scenarios.  It is not very surprising to see that, the RAMP method with strong heredity condition, denoted RAMPs, completely fails in models (\ref{m2})-(\ref{m4}) where the strong heredity condition is violated; in addition, the RAMP method with weak heredity condition, denoted RAMPw,  fails in models (\ref{m3})-(\ref{m4}) where the weak heredity condition is also violated. The RAMP method has a satisfactory performance when the required heredity condition is satisfied.  In particular, the RAMPs performs quite well in model (\ref{m1}). For models (\ref{m2})-(\ref{m4}), the oracle estimate has the smallest Frobenius loss, followed by our proposals. Comparing with the all-pairs-LASSO, under all the settings, our proposal has a better performance in terms of Frobenious loss and model size. For the pure interaction model (\ref{m4}) where no main effects are present, fitting linear regression to obtain  residuals very likely introduces some redundant bias. It is thus not surprising to see that  our proposed response-based procedure (PIEy) slightly outperforms our  residual-based procedure (PIEr).

\begin{table}[!htbp] \centering 
	\caption{The averages (and standard deviations) of  the support recovery rate (``rate"), the Frobenius loss (``loss") and the model size (``size") for  models (\ref{m1})-(\ref{m4}).}
	\label{tab1} 
	\resizebox{1\textwidth}{!}{%
		\begin{tabular}{cccccccc} 
			\\[-1.8ex] 
			\hline 
			\hline 
			$p$ &   & PIEy & PIEr & RAMPs & RAMPw & all-pairs-LASSO & Oracle \\ 
			\hline 
			&\multicolumn{7}{c}{model (\ref{m1}) where the strong heredity condition is satisfied}	\\
			100 & rate &  99.33(4.69) &  99.67(3.33) &  85.00(35.89) &  97.33(12.25) & 100.00(0.00) & 100.00(0.00) \\ 
			& loss &   0.33(0.21) &   0.22(0.14) &   0.42(0.81) &   0.17(0.34) &   0.37(0.08) &   0.09(0.04) \\ 
			& size &   4.31(2.21) &   3.55(0.87) &   3.03(1.67) &   3.38(1.56) &   9.36(5.50) &   3.00(0.00) \\ 
			200 & rate &  98.33(7.30) &  99.33(4.69) &  88.00(31.61) &  99.00(7.42) & 100.00(0.00) & 100.00(0.00) \\ 
			& loss &   0.43(0.30) &   0.29(0.22) &   0.37(0.72) &   0.13(0.24) &   0.43(0.10) &   0.09(0.04) \\ 
			& size &   5.57(3.66) &   4.79(4.34) &   2.83(1.15) &   3.62(2.70) &  10.21(7.72) &   3.00(0.00) \\ 
			\hline 
			&\multicolumn{7}{c}{model (\ref{m2}) where the weak heredity condition is satisfied}	\\
			100 & rate & 100.00(0.00) & 100.00(0.00) &  35.33(24.07) &  86.33(34.20) & 100.00(0.00) & 100.00(0.00) \\ 
			& loss &   0.18(0.08) &   0.17(0.09) &   1.93(0.43) &   0.41(0.81) &   0.40(0.10) &   0.08(0.04) \\ 
			& size &   3.64(1.37) &   3.54(1.27) &   1.85(2.43) &   3.91(3.96) &   5.31(2.55) &   3.00(0.00) \\ 
			200 & rate &  98.33(7.30) &  99.00(5.71) &  35.00(20.85) &  86.00(34.87) & 100.00(0.00) & 100.00(0.00) \\ 
			& loss &   0.24(0.24) &   0.22(0.19) &   1.95(0.41) &   0.42(0.82) &   0.43(0.11) &   0.09(0.04) \\ 
			& size &   4.17(3.18) &   4.45(4.44) &   1.48(1.49) &   3.68(3.05) &   5.93(2.98) &   3.00(0.00) \\ 
			\hline 
			&\multicolumn{7}{c}{model (\ref{m3}) where the heredity conditions is violated}	\\
			100 & rate &  99.00(5.71) & 100.00(0.00) &  21.33(34.98) &  46.00(26.29) & 100.00(0.00) & 100.00(0.00) \\ 
			& loss &   0.30(0.23) &   0.17(0.10) &   1.93(0.62) &   1.51(0.69) &   0.39(0.10) &   0.09(0.04) \\ 
			& size &   4.65(3.31) &   3.64(1.55) &   1.12(1.85) &   3.83(4.48) &   7.17(5.40) &   3.00(0.00) \\ 
			200 & rate &  98.67(6.56) &  99.33(4.69) &  16.00(29.01) &  41.33(23.27) & 100.00(0.00) & 100.00(0.00) \\ 
			& loss &   0.36(0.24) &   0.21(0.17) &   2.06(0.39) &   1.60(0.58) &   0.42(0.09) &   0.09(0.04) \\ 
			& size &   4.97(2.63) &   3.88(2.05) &   0.88(1.47) &   2.84(3.68) &   7.62(5.68) &   3.00(0.00) \\ 
			\hline
			&\multicolumn{7}{c}{model (\ref{m4}) is a pure interaction model where the heredity conditions are violated}	\\		
			100 & rate & 100.00(0.00) & 100.00(0.00) &  13.33(24.16) &  28.00(42.30) & 100.00(0.00) & 100.00(0.00) \\ 
			& loss &   0.11(0.05) &   0.14(0.08) &   2.13(0.36) &   1.73(0.94) &   0.41(0.10) &   0.10(0.04) \\ 
			& size &   3.48(1.03) &   3.54(1.10) &   1.06(2.06) &   3.49(5.27) &   5.05(3.85) &   3.00(0.00) \\ 
			200 & rate &  99.33(4.69) &  99.33(4.69) &   6.67(18.35) &  15.67(34.96) & 100.00(0.00) & 100.00(0.00) \\ 
			& loss &   0.12(0.14) &   0.14(0.15) &   2.17(0.27) &   1.97(0.78) &   0.45(0.09) &   0.09(0.04) \\ 
			& size &   3.68(2.97) &   3.68(2.88) &   0.29(0.81) &   2.72(5.14) &   4.61(2.55) &   3.00(0.00) \\ 
			\hline 
	\end{tabular} }
\end{table}

\subsection{Ultrahigh Dimensional Covariates} 
Our algorithm is very efficient with cheap computation complexity and computer memory. In this part, we demonstrate the performance of our proposal under ultrahigh dimension settings. Apart from the three criteria considered in the previous subsection, we also compare the computation time among all the methods to illustrate the computation efficiency of our method. The parameter settings are the same as those in Subsection \ref{sub1} except that the data dimension $p$ is now set to be 500, 1000 or 2000, and the sample size $n$ is set to be 400 or 800. To save space, we only report the results for model \eqref{m2} where the weak heredity condition holds.

Table \ref{tab2} summaries the simulations results including the ``rate", ``loss",  ``size" and the computation time in seconds (denoted as ``time"). All methods are implemented with a PC with a 3.3 GHz Intel Core i7 CPU and 16GB memory. Overall, the patterns of the estimation accuracy are similar to those in Table \ref{tab1}. For the computation time, it can be seen that our methods are very effective comparing with other methods.
In addition, we can observed that the computation time of our methods increase linearly in $n$ and $p^2$, which is consistent with the computation complexity $O\{\min(n,p) p^2\}$ we claimed in the last section. The computation time of RAMP is not so sensitive to the sample size or data dimension since it used the structure information of heredity conditions.  For the all-pairs-LASSO, 
we test the computation time using LARS \citep{efron2004least} and it turns out to be very slow. We instead 
implemented the all-pairs-LASSO using ``glmnet"  \citep{glmnet}. We remark that ``glmnet"  \citep{glmnet} is the state of art algorithm for LASSO problems and the package was further accelerated by strong rules \citep{tibshirani2012strong}. From Table \ref{tab2} we can see that the computation time also seems to be increasing linearly in $n$ and quadratically in $p$. However, the all-pairs-LASSO uses more computer memory since the number of covariates is of order $O(p^2)$ and will break down when $p=2000$ due to out of memory in R.  In summary, our proposal are more efficient than the all-pairs-LASSO in both computation complexity and computation memory.
\begin{table}[!htbp] \centering 
	\caption{Simulation results for weak heredity model with ultra-high covariates.}
	\label{tab2} 
	\resizebox{1\textwidth}{!}{%
		\begin{tabular}{cccccccc} 
			\\[-1.8ex] 
			\hline 
			\hline 
			$p$ &   & PIEy & PIEr & RAMPs & RAMPw & all-pairs-LASSO & Oracle \\ 
			\hline 
			&\multicolumn{7}{c}{$n=400$}	\\
			500 & rate & 100.00(0.00) & 100.00(0.00) &  38.00(13.42) &  99.00(10.00) & 100.00(0.00) & 100.00(0.00) \\ 
			& loss &   0.14(0.08) &   0.11(0.07) &   1.94(0.27) &   0.09(0.24) &   0.31(0.06) &   0.06(0.02) \\ 
			& size &   3.56(1.29) &   3.15(0.58) &   1.27(0.74) &   3.24(1.91) &   5.11(4.17) &   3.00(0.00) \\ 
			& time &   3.90(0.41) &   3.75(0.33) &  28.71(8.54) &  26.37(5.39) &  32.90(3.43) &   0.02(0.00) \\ 
			1000 & rate & 100.00(0.00) & 100.00(0.00) &  37.00(15.64) &  95.33(20.66) & 100.00(0.00) & 100.00(0.00) \\ 
			& loss &   0.13(0.08) &   0.09(0.05) &   1.93(0.33) &   0.17(0.52) &   0.34(0.06) &   0.06(0.03) \\ 
			& size &   3.60(1.62) &   3.20(0.95) &   1.38(1.20) &   3.76(3.85) &   4.02(2.09) &   3.00(0.00) \\ 
			& time &  12.70(0.49) &  12.56(0.55) &  48.26(8.88) &  50.84(10.54) & 126.66(0.55) &   0.04(0.01) \\ 
			2000 & rate & 100.00(0.00) & 100.00(0.00) &  31.67(11.96) &  88.00(32.66) & -& 100.00(0.00) \\ 
			& loss &   0.15(0.10) &   0.13(0.09) &   2.02(0.14) &   0.34(0.78) & -&   0.06(0.02) \\ 
			& size &   3.83(2.00) &   3.29(0.82) &   1.19(0.92) &   4.67(6.38) & -&   3.00(0.00) \\ 
			& time &  58.46(6.15) &  59.33(6.21) &  34.61(4.62) &  92.41(21.77) & - &   0.19(0.03) \\ 
			\hline		
			&\multicolumn{7}{c}{$n=800$}	\\
			500 & rate & 100.00(0.00) & 100.00(0.00) &  38.67(13.99) & 100.00(0.00) & 100.00(0.00) & 100.00(0.00) \\ 
			& loss &   0.09(0.04) &   0.06(0.04) &   1.92(0.33) &   0.04(0.02) &   0.22(0.04) &   0.04(0.02) \\ 
			& size &   3.20(0.64) &   3.05(0.26) &   1.98(3.08) &   3.04(0.20) &   3.50(1.45) &   3.00(0.00) \\ 
			& time &   4.24(0.55) &   4.13(0.58) & 102.20(15.51) & 132.41(15.95) &  62.85(0.72) &   0.02(0.00) \\ 	
			1000 & rate & 100.00(0.00) & 100.00(0.00) &  38.33(11.96) & 100.00(0.00) & 100.00(0.00) & 100.00(0.00) \\ 
			& loss &   0.09(0.05) &   0.06(0.03) &   1.94(0.21) &   0.04(0.02) &   0.23(0.03) &   0.04(0.01) \\ 
			& size &   3.28(0.98) &   3.06(0.37) &   1.77(2.24) &   3.01(0.10) &   3.41(0.78) &   3.00(0.00) \\ 
			& time &  25.95(2.65) &  25.64(2.42) & 116.46(23.30) & 131.51(25.39) & 261.54(9.76) &   0.06(0.01) \\ 
			2000 & rate & 100.00(0.00) & 100.00(0.00) &  37.00(12.44) & 100.00(0.00) & -& 100.00(0.00) \\ 
			& loss &   0.09(0.06) &   0.06(0.04) &   1.94(0.31) &   0.04(0.02) &   - &   0.04(0.02) \\ 
			& size &   3.52(1.73) &   3.08(0.37) &   1.37(1.32) &   3.01(0.10) &  - &   3.00(0.00) \\ 
			& time &  90.72(6.72) &  96.52(7.18) & 243.33(72.01) & 249.13(46.48) &  - &   0.24(0.02) \\ 
			\hline
			\multicolumn{8}{l}{$-$ out of memory in R}
	\end{tabular} }

\end{table}

\subsection{Estimation of Main Effects} 
In this section we evaluate how estimation of  main effects affects the estimation of interactions.   Both our proposed residual-based penalized interaction estimation and the RAMP method involve estimating the main effects. To fixed the signal-to-noise ratio for all the settings, we simply draw the covariates $\x=(X_1,\ldots,X_p)\trans$ from ${\cal N}({\bf0},\I_{p\times p})$ and consider the following quadratic model
\beqrs
Y =& d^{-1/2}\left(X_1+X_6+X_{10}+X_{k_1}+\cdots+X_{k_{d-3}}\right)\\
&+2X_{1} X_{6}+X^2_{6}+2X_{6} X_{10}+\varepsilon.
\eeqrs
The number of main effects is increased from $d=  3$ to $48$.  We always include $X_1$, $X_6$ and $X_{10}$  to ensure that the strong heredity condition holds true. We also randomly choose $X_{k_1},\ldots, X_{k_{d-3}}$ from $X_{11},\ldots,X_{p}$. Figure  \ref{fig1} reports the support recovery rate  of $\hOme$ and the Frobenius loss of $\|\hOme-\bOme\|_F$.

It can be clearly seen that, as the number of main effects increases from $d=3$ to $48$, both versions of the RAMP method, RAMPs and RAMPw, deteriorate gradually in terms of both criteria, indicating that the RAMP method heavily relies on estimating the main effects accurately. For all-pairs-LASSO, the support recovery rate is good while the Frobenius loss becomes worse when $d$ increases.  By contrast, our proposal is very robust to the number of main effects under both criteria. 
Moreover, when the number of main effects increases, PIEy will be slightly better than PIEr in terms of Frobenius loss. Theses findings confirm our theoretical results in Theorem \ref{thm2} since the estimation $\hb$ will become worse when $d$ increases. 

\begin{figure}[!ht]
	\centerline{
		\begin{tabular}{cc}				
			\psfig{figure=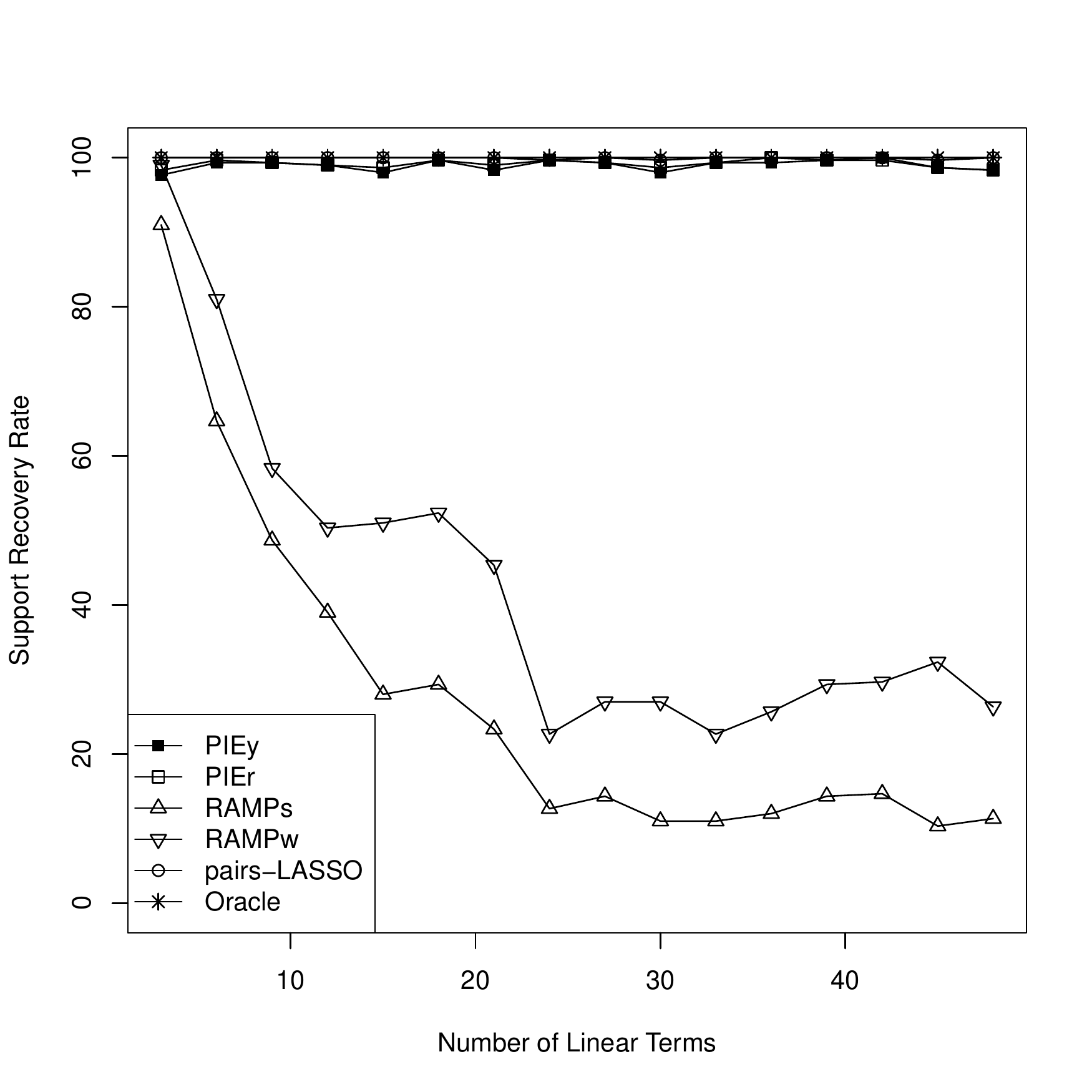,width=2.5 in,height=2.5 in,angle=0} &
			\psfig{figure=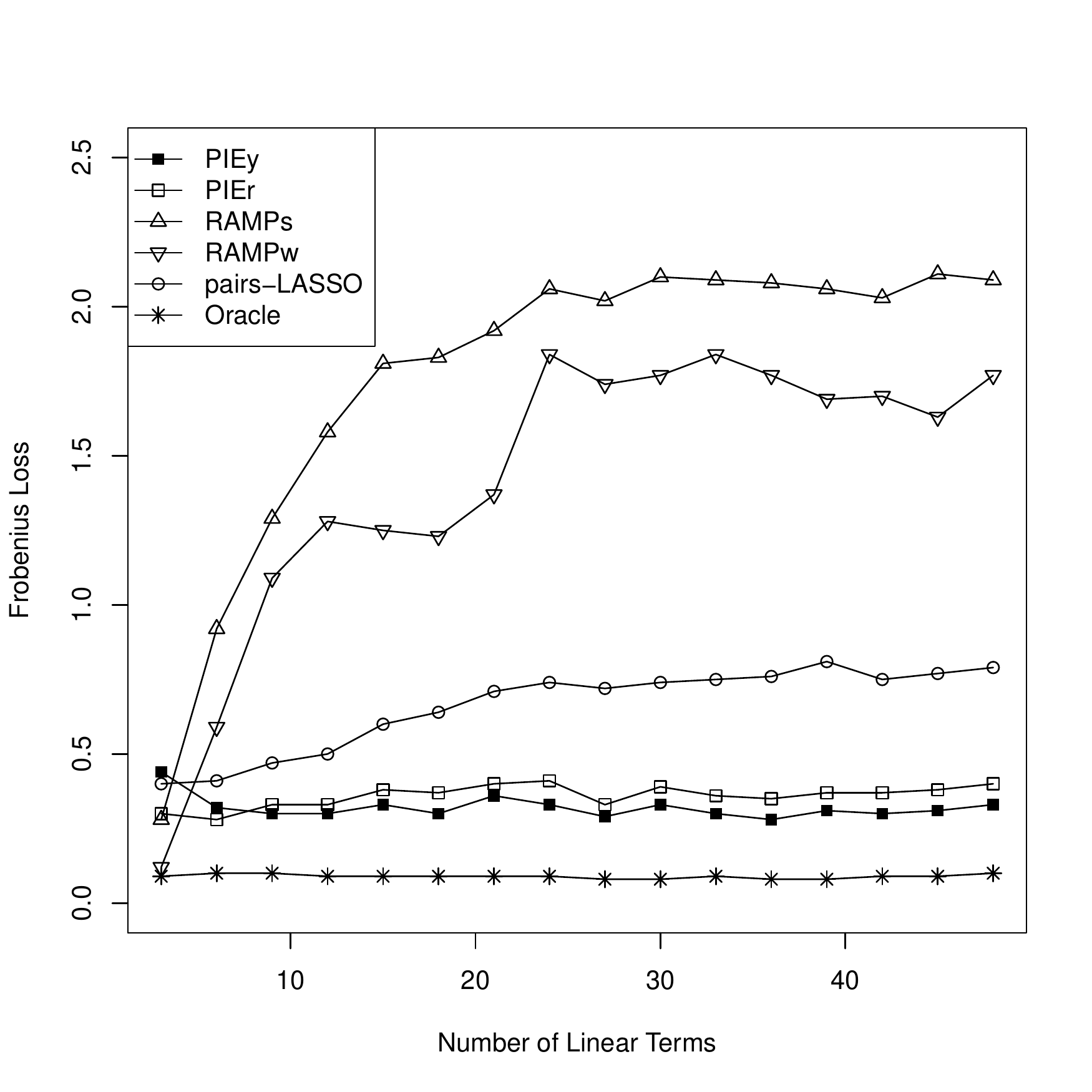,width=2.5 in,height=2.5 in,angle=0} \\
		\end{tabular}
	}
	\caption{\textit{The vertical axis stands for the support recovery rate (left) and Frobenius loss (right) of $\hOme$, and the horizontal axis stands for the number of main effects.}}
	\label{fig1}
\end{figure}

\subsection{Non-Normal Covariates}
In this part, we investigate the performance of our proposal when the covariates are non-normal, and the factor model assumptions are violated. Let $\x=\bSig^{1/2}\z,~\bSig=(0.5^{|i-j})_{100 \times 100}$ and $\z = (Z_1,\cdots,Z_p)\trans$. We draw $Z_k$s independently from (i) uniform distribution on the interval $[-\sqrt 3, \sqrt 3]$ where $\Delta=1.8$, (ii) Student's t-distribution $t(5)\sqrt{3/5}$ where $\Delta=9$ and (iii) Laplace distribution $\mbox{Laplace}(0,1)/\sqrt{2}$ where $\Delta=6$. In all scenarios, the $Z_k$'s are symmetric and have unit variance.  

Table \ref{tab3} 
reports the support recovery rate (``rate") and the number of interactions that are estimated as nonzero (``size") and  the Frobenius loss (``loss") of $\hOme$. From Table \ref{tab3} we can see  
that PIEy and PIEr are still very effective when the covariates are non-normal, and the performance comparing with other methods are similar to those we have observed under normal assumptions, indicating that our proposal is practically robust to the violation of the theoretical assumptions.

\begin{table}[!htbp] 
	\caption{Simulation results for non-normal covariates where $n=400$ and $p=100$.}
	\label{tab3}
	\centering
	\resizebox{1\textwidth}{!}{%
		\begin{tabular}{@{\extracolsep{5pt}} cccccccc} 
			\\[-1.8ex]\hline 
			\hline
			&  & PIEy & PIEr & RAMPs & RAMPw & all-pairs-LASSO & Oracle \\ 
			\hline \\[-1.8ex] 
			\hline 
			&\multicolumn{7}{c}{model (\ref{m1}) where the strong heredity condition is satisfied}	\\
			Unif & rate &  99.33(4.69) &  99.67(3.33) & 100.00(0.00) & 100.00(0.00) & 100.00(0.00) & 100.00(0.00) \\ 
			& size &   3.86(1.73) &   3.19(0.72) &   3.00(0.00) &   3.03(0.17) &   5.96(3.94) &   3.00(0.00) \\ 
			& loss &   0.22(0.18) &   0.13(0.12) &   0.06(0.02) &   0.06(0.03) &   0.26(0.05) &   0.06(0.02) \\ 
			t(5) & rate &  93.33(17.08) &  95.33(14.23) &  93.00(25.64) &  99.00(5.71) & 100.00(0.00) & 100.00(0.00) \\ 
			& size &   6.12(3.35) &   5.99(5.80) &   3.40(2.27) &   3.59(1.93) &   7.61(4.96) &   3.00(0.00) \\ 
			& loss &   0.47(0.55) &   0.33(0.50) &   0.22(0.62) &   0.11(0.27) &   0.25(0.06) &   0.06(0.03) \\ 
			Lap & rate & 100.00(0.00) & 100.00(0.00) &  90.67(28.85) &  98.67(6.56) & 100.00(0.00) & 100.00(0.00) \\ 
			& size &   5.87(4.12) &   4.90(3.61) &   2.93(1.27) &   3.75(3.15) &   7.10(5.27) &   3.00(0.00) \\ 
			& loss &   0.25(0.12) &   0.15(0.06) &   0.27(0.66) &   0.11(0.24) &   0.23(0.06) &   0.06(0.03) \\ 
			\hline 
			&\multicolumn{7}{c}{model (\ref{m2}) where the weak heredity condition is satisfied}	\\
			Unif & rate &  99.67(3.33) &  99.67(3.33) &  46.67(20.65) & 100.00(0.00) & 100.00(0.00) & 100.00(0.00) \\ 
			& size &   3.19(0.61) &   3.14(0.62) &   2.17(2.28) &   3.01(0.10) &   4.30(2.53) &   3.00(0.00) \\ 
			& loss &   0.13(0.11) &   0.11(0.11) &   1.75(0.53) &   0.07(0.04) &   0.28(0.06) &   0.07(0.03) \\ 
			t(5) & rate &  94.33(15.75) &  95.00(14.51) &  51.33(27.80) &  98.00(14.07) & 100.00(0.00) & 100.00(0.00) \\ 
			& size &   6.17(4.74) &   6.14(6.50) &   2.99(2.85) &   3.39(1.98) &   5.20(3.39) &   3.00(0.00) \\ 
			& loss &   0.36(0.55) &   0.31(0.53) &   1.61(0.69) &   0.11(0.36) &   0.25(0.05) &   0.06(0.03) \\ 
			Lap & rate & 100.00(0.00) & 100.00(0.00) &  48.67(28.59) &  94.00(23.87) & 100.00(0.00) & 100.00(0.00) \\ 
			& size &   5.37(3.72) &   5.21(4.17) &   2.54(2.72) &   3.56(3.07) &   4.87(2.41) &   3.00(0.00) \\ 
			& loss &   0.17(0.08) &   0.13(0.08) &   1.63(0.69) &   0.20(0.57) &   0.25(0.06) &   0.05(0.02) \\ 
			\hline
			&\multicolumn{7}{c}{model (\ref{m3}) where the heredity conditions is violated}	\\
			Unif & rate &  99.67(3.33) &  99.67(3.33) &  14.00(29.66) &  46.33(25.47) & 100.00(0.00) & 100.00(0.00) \\ 
			& size &   3.95(1.83) &   3.13(0.44) &   1.19(2.91) &   4.27(5.99) &   5.26(3.89) &   3.00(0.00) \\ 
			& loss &   0.22(0.16) &   0.12(0.12) &   2.05(0.50) &   1.46(0.66) &   0.28(0.06) &   0.06(0.02) \\ 
			t(5) & rate &  94.00(15.98) &  94.33(15.02) &  37.00(41.81) &  68.33(30.84) & 100.00(0.00) & 100.00(0.00) \\ 
			& size &   5.93(3.25) &   5.24(3.01) &   2.99(4.30) &   5.07(5.39) &   6.26(4.24) &   3.00(0.00) \\ 
			& loss &   0.40(0.54) &   0.33(0.55) &   1.65(0.85) &   0.98(0.87) &   0.25(0.07) &   0.06(0.03) \\ 
			Lap & rate &  99.67(3.33) & 100.00(0.00) &  42.00(41.47) &  62.00(31.79) & 100.00(0.00) & 100.00(0.00) \\ 
			& size &   5.80(4.08) &   5.08(4.07) &   2.76(3.16) &   6.02(6.95) &   5.86(3.35) &   3.00(0.00) \\ 
			& loss &   0.21(0.17) &   0.11(0.06) &   1.59(0.85) &   1.12(0.87) &   0.24(0.06) &   0.06(0.03) \\ 
			\hline
			&\multicolumn{7}{c}{model (\ref{m4}) is a pure interaction model where the heredity conditions are violated}	\\		
			Unif & rate &  99.67(3.33) &  99.67(3.33) &   6.67(17.08) &  15.67(32.29) & 100.00(0.00) & 100.00(0.00) \\ 
			& size &   3.08(0.53) &   3.06(0.34) &   0.48(1.42) &   3.28(6.52) &   3.82(1.50) &   3.00(0.00) \\ 
			& loss &   0.08(0.11) &   0.09(0.11) &   2.18(0.17) &   1.98(0.71) &   0.31(0.06) &   0.06(0.03) \\ 
			t(5) & rate &  94.33(15.75) &  94.67(14.77) &  26.00(33.02) &  49.67(46.30) & 100.00(0.00) & 100.00(0.00) \\ 
			& size &   6.00(6.19) &   5.97(6.15) &   1.81(2.92) &   5.66(7.17) &   5.10(4.03) &   3.00(0.00) \\ 
			& loss &   0.29(0.57) &   0.29(0.55) &   1.92(0.61) &   1.28(1.09) &   0.27(0.07) &   0.06(0.03) \\ 
			Lap & rate & 100.00(0.00) & 100.00(0.00) &  27.67(30.72) &  52.00(45.52) & 100.00(0.00) & 100.00(0.00) \\ 
			& size &   5.11(4.34) &   5.07(4.43) &   2.78(4.39) &   5.55(6.93) &   4.41(2.79) &   3.00(0.00) \\ 
			& loss &   0.07(0.04) &   0.08(0.05) &   1.95(0.50) &   1.25(1.09) &   0.26(0.06) &   0.06(0.02) \\ 
			\hline \\[-1.8ex] 
	\end{tabular} }
\end{table} 

\section{AN APPLICATION}
In this section, we apply our proposal to the red wine   dataset which is publicly  available at \url{https://archive.ics.uci.edu/ml/datasets/Wine+Quality}. The data consist of 11 measurements of  several chemical constituents, including determination of density, alcohol or pH values for 1599 red wine samples from the northwest region of Portugal. The response variable is the median of the scores evaluated by human experts and each score ranges from 0 (very bad) to 10 (very excellent). The same dataset was once analyzed by \cite{wine}. In their analysis,  interactions are found to be very helpful for  prediction. The original data are 1599 observations on 11 covariates. To accommodate high dimensional setting,   we follow \cite{radchenko2010variable} and standardize all the variables and conduct the following two experiments: 
\begin{itemize}
	\item {\bf Experiment 1.} Denote the original 11 covariates as $X_1,\ldots, X_{11}$. We add 100 noise variables $X_{12},\ldots,X_{111}$ to the data, where $X_{12},\ldots, X_{61}$ are generated from the standard normal distribution and the remainders are generated by the uniform distribution on the interval $[-\sqrt 3, \sqrt 3]$.  
	\item {\bf Experiment 2.} We generate  the  covariates in the same way as in Experiment 1. In addition, we modify the response variable $Y$ by adding two more interactions: $Y+0.5 X_{12}X_{13}+0.5 X_{61}X_{62}$. In this experiment, both the strong and the weak  heredity conditions are violated.
\end{itemize}  
In both experiments  the covariate dimension $p=111$, leading to $111\times 100/2=6,105$ possible interactions. We randomly select 400 observations as the sample and the procedure is repeated 100 times. The heat map of the frequencies of the identified interactions are summarized in Figure \ref{figreal}. It can be clearly seen that, in Experiment 1, the detected interactions mainly occur among the first 11 covariates collected in the original dataset while the interactions related to the remaining 100 noisy covariates are rarely   detected. This indicates that both PIEy and PIEr are able to exclude irrelevant interactions. In Experiment 2, both methods are able to exclude irrelevant interactions with high probability. In addition,   the interactions  $X_{12}X_{13}$  and $X_{61}X_{62}$ are successfully detected throughout. 

\begin{figure}[!ht]
	\centerline{
		\begin{tabular}{cc}				
			\psfig{figure=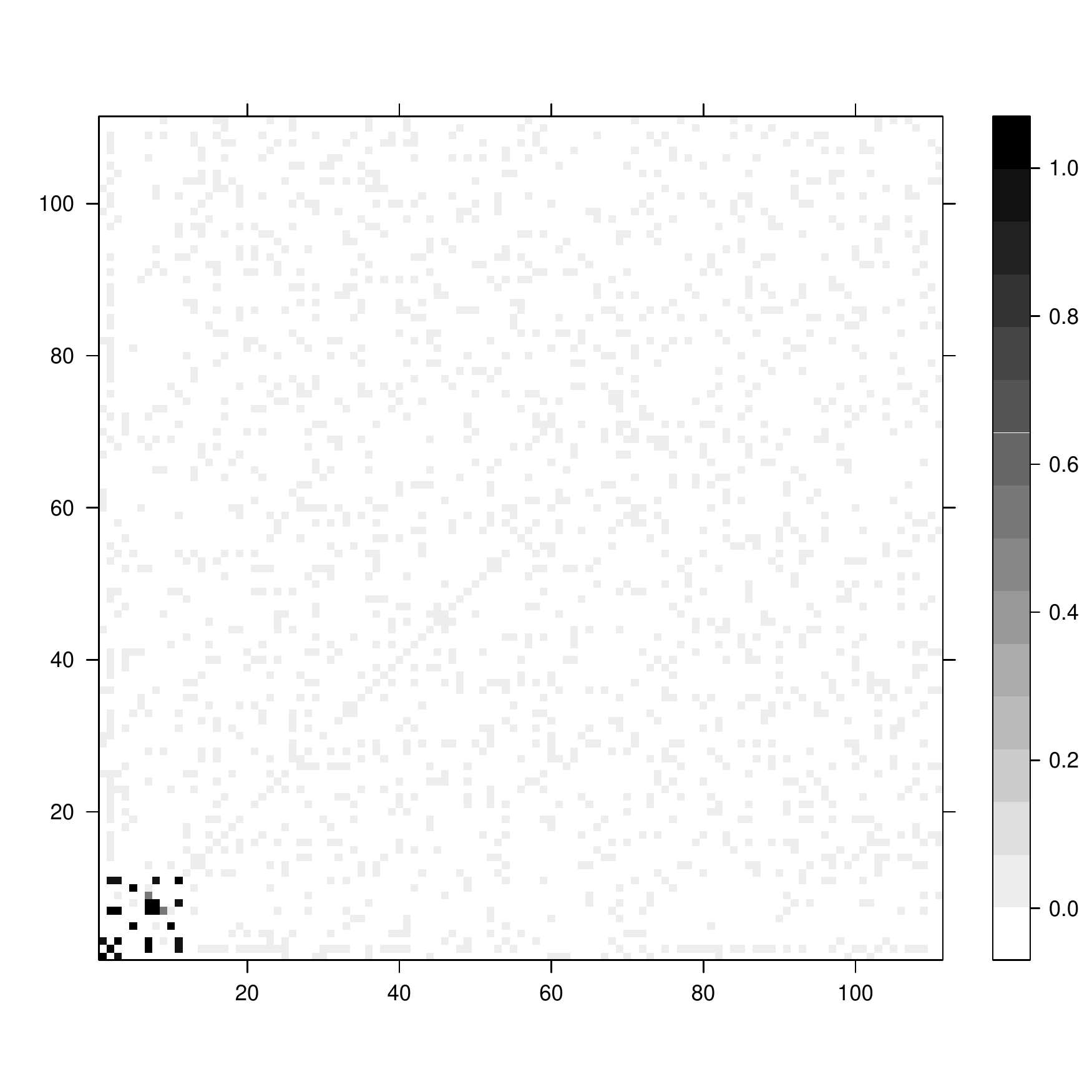,width=2.5 in,height=2.5 in,angle=0} &
			\psfig{figure=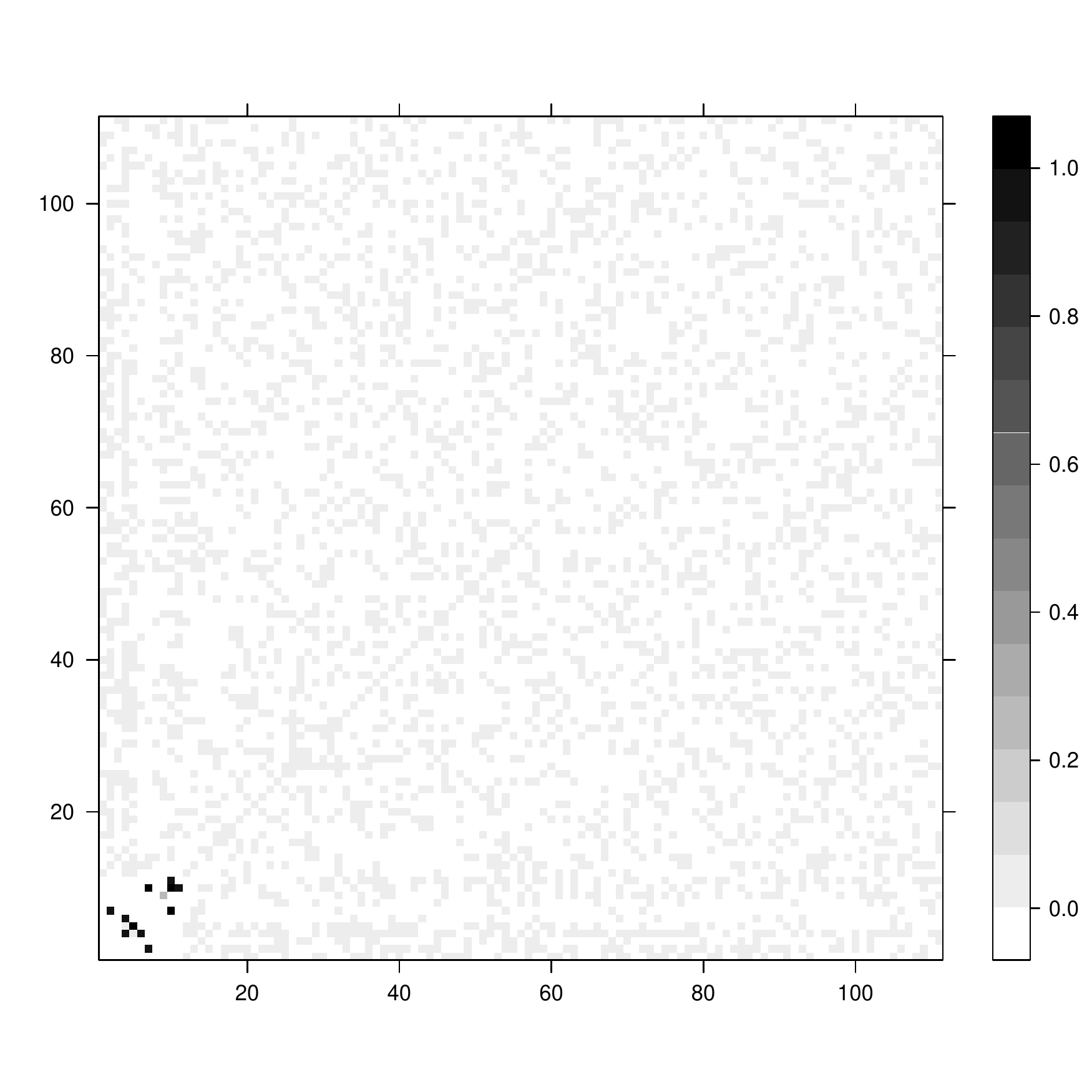,width=2.5 in,height=2.5 in,angle=0} \\
			(a) PIEy
			& (b) PIEr \\
			\psfig{figure=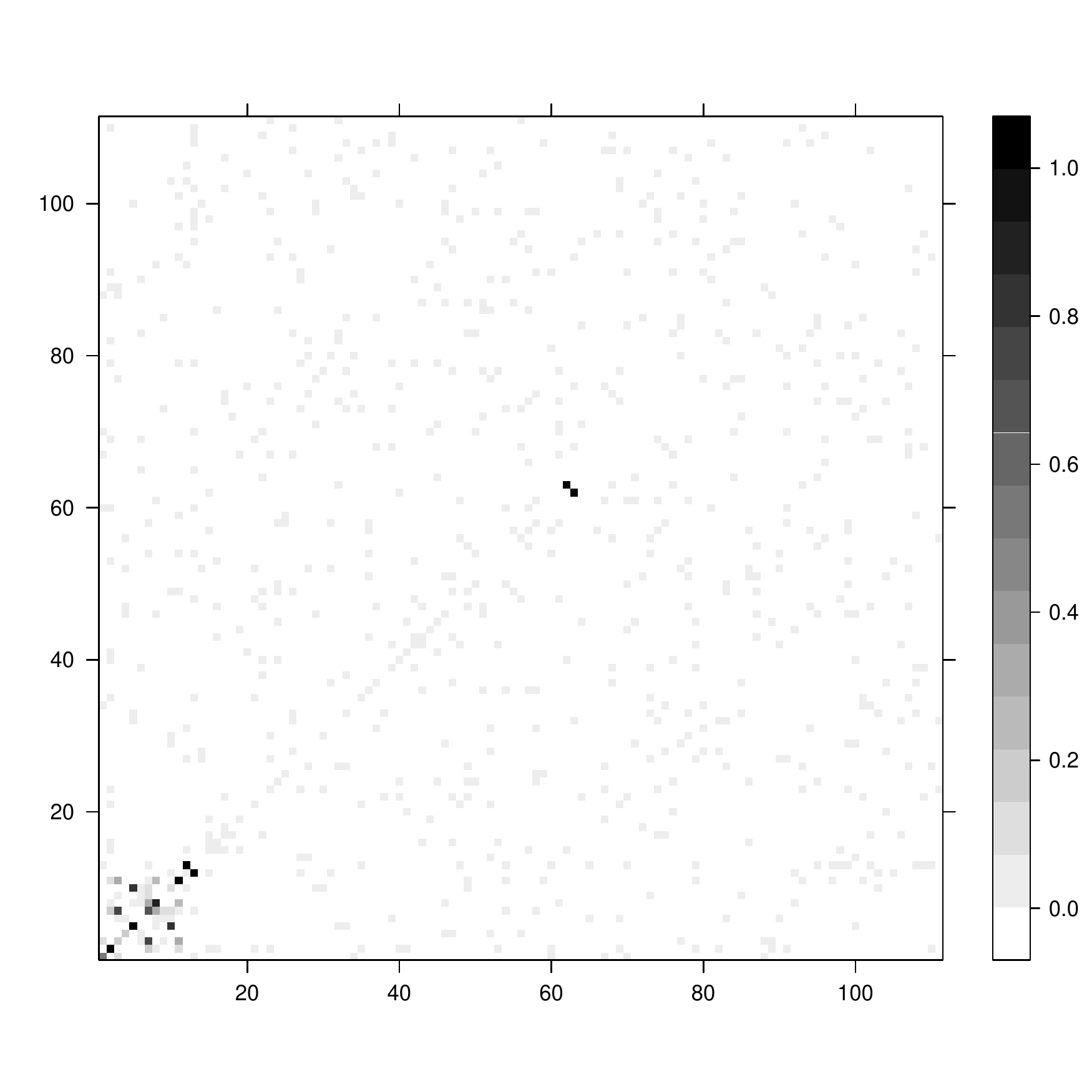,width=2.5 in,height=2.5 in,angle=0} &
			\psfig{figure=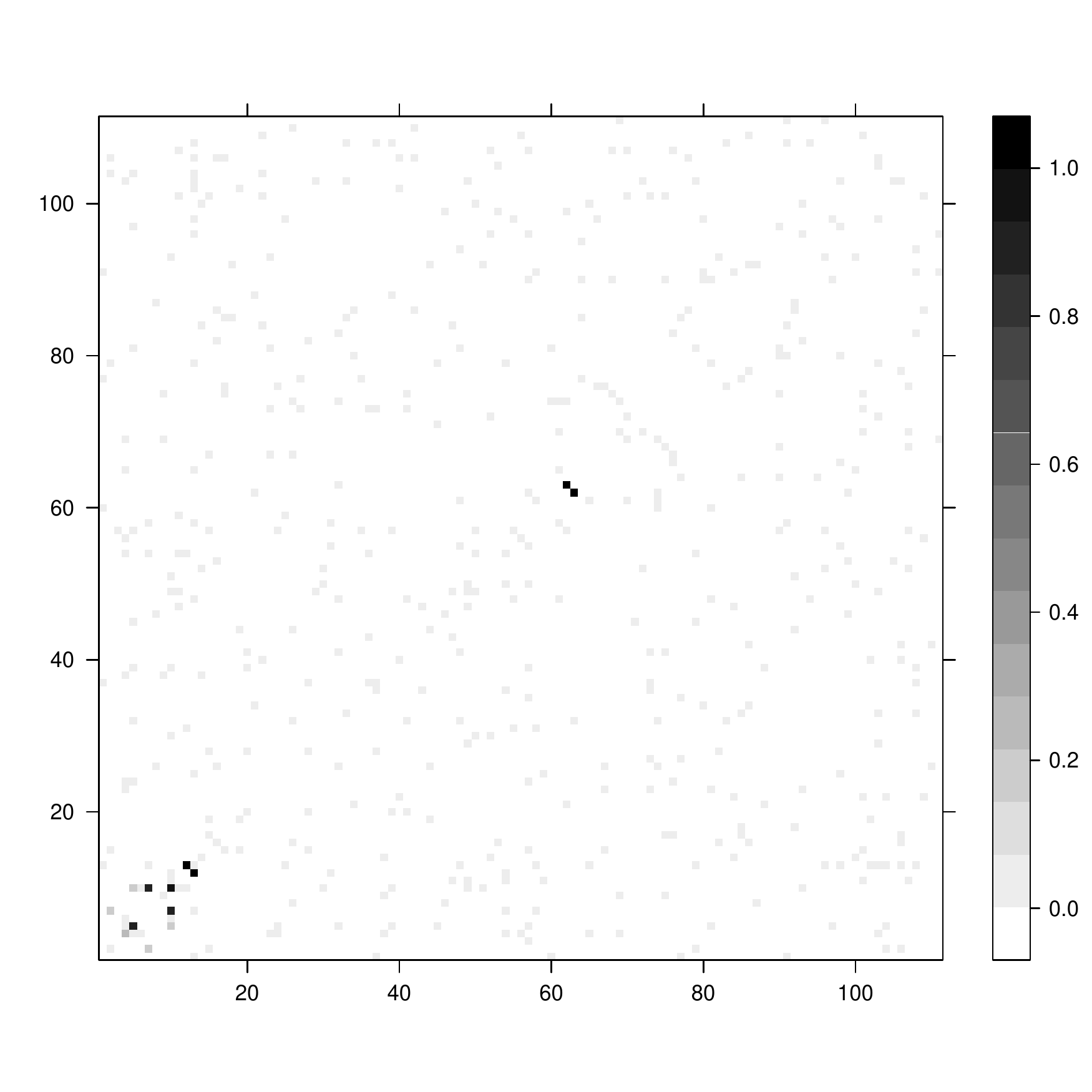,width=2.5 in,height=2.5 in,angle=0} \\
			(c) PIEy
			& (d) PIEr
		\end{tabular}
	}
	\caption{\textit{Heat maps of frequency of the interactions identified out of 100 replications using PIEy and PIEr. Upper panel: Experiment 1. Lower panel: Experiment 2.}}
	\label{figreal}
\end{figure}

\section{DISCUSSION}
In this paper we propose a penalized  estimation to detect interactions without requiring heredity conditions. We develop an efficient ADMM algorithm to implement our estimation. We demonstrate the effectiveness of our proposal through extensive numerical studies. We remark here that, if the strong or the weak condition is satisfied, some existing methods, such as the RAMP method, work pretty well. However, when we have little prior information about whether the heredity condition holds true or not in an application, we advocate using our proposal in that it does not require this assumption. In effect, if the heredity condition is known to be satisfied, we can also incorporate it into our proposal through a two-stage procedure. In the first stage, we use the penalized least squares to identify the main effects; and in the second stage, we implement our procedure using only the main effects that are selected in the first stage. This allows us to handle ultrahigh dimensional problems efficiently. Another way to enhance the power of our proposal is to incorporate some screening procedures into our problems.  We also remark here that, in the present context we focus on quadratic regression which contains pairwise  interactions of the form $(X_{k_1}X_{k_2})$. We remark here that our idea can be  generalized  naturally  to higher-order interactions models of the form $(X_{k_1}X_{k_2}\cdots X_{k_d})$ $(d>2)$.  However, estimating high-order interactions is generally much more challenging because there are {\large$\binom{p}{d}$}  possible interactions of order $d$ in total. The central task is possibly  to  develop efficient algorithms with minimal computational complexity. Researches along these lines are warranted.

\section*{Appendix}
\subsection{Appendix A: Some Useful Lemmas}
We first show that the ADMM algorithm to minimize  \eqref{admm0} converges linearly.  

\begin{lemm}\label{linearADM}
	Given $\hSig$ and $\hLam$. Suppose that the ADMM algorithm \eqref{AlgoB}-\eqref{Algo:L} generates a   sequence of solutions $\{(\B^k,\bPsi^k, \L^k), k=1,\ldots\}$. Then  $\{\B^k,\bPsi^k\}$ converges linearly to the minimizer of \eqref{admm0}, and $\|\B^k-\bPsi^k\|_F$ converges linearly to zero. 
\end{lemm}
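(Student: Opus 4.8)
The plan is to cast the iteration \eqref{AlgoB}--\eqref{Algo:L} as an instance of the standard two-block ADMM for the composite problem $\min_{\B,\bPsi}\; f(\B)+g(\bPsi)$ subject to $\B-\bPsi=\mathbf 0$, where $f(\B)\defby \tr(\B\trans\hSig\B\hSig)-\tr(\B\hLam)$ and $g(\bPsi)\defby\lambda_n\|\bPsi\|_1$. The first step is to verify the hypotheses under which linear convergence of ADMM is known (e.g.\ the conditions in \cite{nishihara2015general}, \cite{hong2017linear} or \cite{deng2016global}): here $g$ is convex, proper, and closed, and $f$ is a convex quadratic form. Crucially, $f$ is \emph{strongly} convex on $\mR^{p\times p}$: writing $f(\B)=\tfrac12\vec(\B)\trans(2\,\hSig\otimes\hSig)\vec(\B)-\vec(\hLam)\trans\vec(\B)$, the Hessian is $2\,\hSig\otimes\hSig$, whose smallest eigenvalue is $2\lambda_{\min}(\hSig)^2$. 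This is strictly positive precisely when $\hSig$ is nonsingular; when $p>n$ and $\hSig$ is rank-deficient one must instead restrict attention to the range of $\hSig\otimes\hSig$ plus the contribution of the proximal term, or simply note that the augmented Lagrangian subproblems are still uniquely solved because of the $(\rho/2)\|\B-\bPsi\|_F^2$ term, and invoke a linear-convergence result that only requires strong convexity of $f$ restricted to an appropriate subspace together with a calmness/error-bound property of the KKT map. I would take the cleanest route: observe that the optimality system of \eqref{admm0} is a piecewise-linear (polyhedral) map because $f$ is quadratic and $\partial g$ is polyhedral, so a global error bound holds automatically, and then cite the corresponding ADMM linear-rate theorem.

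Next I would make the identification with the iteration in the paper explicit: the $\B$-step \eqref{Algo:B} is exactly $\argmin_\B f(\B)+\tr\{\L^k(\B-\bPsi^k)\}+(\rho/2)\|\B-\bPsi^k\|_F^2$, whose stationarity equation $2\hSig\B\hSig+\rho\B=\hLam-\L^k+\rho\bPsi^k$ is solved in closed form via the eigendecomposition $\hSig=\U\D_0\U\trans$ and the Hadamard-product formula given in the text; the $\bPsi$-step is the $\ell_1$ proximal map, i.e.\ elementwise soft-thresholding; and the $\L$-step is the standard dual ascent. Thus the paper's scheme is genuinely the two-block ADMM with no inexactness, so the cited convergence theory applies verbatim. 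From that theory one gets that the primal-dual sequence $(\B^k,\bPsi^k,\L^k)$ converges Q-linearly to a KKT point $(\B^\star,\bPsi^\star,\L^\star)$ of \eqref{admm0}, and since the constraint forces $\B^\star=\bPsi^\star$ to be the unique minimizer (unique because of strong convexity of $f$, or because the minimizer of \eqref{pie} is unique), $\{(\B^k,\bPsi^k)\}$ converges linearly to that minimizer. Finally, $\|\B^k-\bPsi^k\|_F=\rho^{-1}\|\L^{k+1}-\L^k\|_F$, and linear convergence of $\L^k$ gives linear convergence of this quantity to zero; alternatively $\|\B^k-\bPsi^k\|_F\le\|\B^k-\B^\star\|_F+\|\bPsi^k-\bPsi^\star\|_F$, already shown to vanish linearly.

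The main obstacle is the rank-deficient regime $p>n$, where $\hSig\otimes\hSig$ is singular and $f$ is only convex, not strongly convex, on all of $\mR^{p\times p}$; a naive appeal to "strongly convex $f$ $\Rightarrow$ linear ADMM" fails. The fix, which I would spell out, is that linear convergence of ADMM for this problem does not actually need global strong convexity of $f$: because $f$ is quadratic and $g$ is polyhedral, the whole KKT operator is polyhedral multifunction, hence metrically subregular everywhere (Robinson's theorem), and the error-bound-based ADMM linear-rate results (e.g.\ \cite{hong2017linear}, or the analysis via the Douglas--Rachford operator being a polyhedral, hence globally metrically subregular, fixed-point map) yield the claimed rate with no full-rank assumption. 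The remaining steps --- writing down the KKT system, checking polyhedrality, and translating the operator-theoretic linear rate back to $(\B^k,\bPsi^k)$ and to $\|\B^k-\bPsi^k\|_F$ --- are routine, so I would present them briefly and cite the optimization literature for the quantitative rate rather than re-deriving constants.
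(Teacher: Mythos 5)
Your proposal is correct and follows essentially the same route as the paper: both reduce the claim to verifying that the objective has the ``strongly convex smooth function composed with a (possibly rank-deficient) linear map, plus polyhedral parts'' structure — here $\tr(\B\trans\hSig\B\hSig)=\|\A_1\vec(\B)\|_F^2$ with $\A_1=(\hSig\otimes\hSig)^{1/2}$, a linear term, and the polyhedral $\ell_1$ penalty — so that the error-bound-based linear-rate theorem of \cite{hong2017linear} applies without any full-rank assumption on $\hSig$. Your extra discussion of metric subregularity of the polyhedral KKT map is a valid alternative justification of the same error bound, and your handling of the $p>n$ case is exactly the point the paper's proof relies on.
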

\begin{proof}
	The objective function in the minimization problem    \eqref{admm0}  can be decomposed into two components: $f(\B,\bPsi)=f_1(\B)+f_2(\bPsi)$, where
	$f_1(\B)\defby \tr \{(\B \hSig)^2\}-\tr (\B \hLam)$ and 
	$f_2(\bPsi)\defby\lambda_n \| \bPsi\|_1$. Rewrite $\tr \{(\B \hSig)^2\}=\vec(\B)\trans(\hSig\otimes \hSig)\vec(\B)$. Denote $\hSig\otimes \hSig=\U\trans\bLam \U$ and $\A_1=\U\trans\bLam^{1/2} \U$. Let  $g_1(\x)\defby\|\x\|_F^2$ be a function  defined on $\mR^{p^2}\mapsto \mR$, and $h_1(\x)\defby\tr(\hLam\x)$, $h_2(\x)\defby\lambda_n\|\x\|_1$ be two functions defined on $\mR^{p^2}\mapsto \mR$. Then $f_1(\B)=g_1\{\A_1\vec(\B)\}+h_1\{\vec(\B)\}$ and $f_2(\bPsi)=h_2\{\vec(\bPsi)\}$. 
	Given $\hSig$, $\hLam$ and $\lambda_n$,  the gradient of $g_1$ is  uniformly Lipschitz continuous and $h_1$ and $h_2$ are  polyhedral. Lemma  \ref{linearADM} thus follows immediately from Theorem 3.1 of \cite{hong2017linear}. 
\end{proof}

Next we present some useful lemmas for the proofs of the main theorems. Without loss of generality, in what follows we assume that $\E(\x)={\bf0}$ and $\E (Y)=0$.
{\lemm\label{lem01} Let $W_1,\cdots,W_n$ be independent variables and ${\rm E}\{\exp(c_1 |W_i|^{\alpha_0})\} < A_0$ for some $0<\alpha_0 \leq 1, c_1>0, , A_0>0$. Then for $0<t\leq 1$, there exist constants $c_2,c_3>0$ such that
	\beqrs
	\pr\Big\{\Big| n^{-1} \sum_{i=1}^n (W_i- {\rm E} W_i) \Big|>t\Big\} \leq c_2 \exp(-c_3 n^{\alpha_0} t^2).
	\eeqrs
}
\noindent\textit{Proof of Lemma \ref{lem01}:}  For $\E W_i=0$, see Lemma B.4 of \cite{hao2014interaction}. Here, we only need to show $\E\{\exp(c_1 |W_i-\E W_i|^{\alpha_0})\} < A_1$ for some $A_1>0$. By the integral identity of the expectation, we have
\beqrs
\E|W_i| 
&=&\int_{0}^{\infty} \pr \left\{ \exp(c_1 |W_i|^{\alpha_0})>\exp(c_1t^{\alpha_0})\right\}  dt\\
& \leq & \int_{0}^{\infty} \E \{\exp(c_1 |W_i|^{\alpha_0})\} \exp(-c_1t^{\alpha_0}) dt
\leq  A_0  \int_{0}^{\infty} \exp(-c_1t^{\alpha_0}) dt \defby c.
\eeqrs
Consequently,
$\E\{\exp(c_1 |W_i-\E W_i|^{\alpha_0})\}  \leq  \E\{\exp(2c_1 |W_i|^{\alpha_0}+2c_1 |\E W_i|^{\alpha_0})\} \leq  A_0 \exp(2c_1 c^{\alpha_0})\defby A_1.$
The proof is completed. \hfill$\fbox{}$

{\lemm \label{lem02} Let $W_1$ and $W_2$ be two variables such that
	${\rm E} \{ \exp(c_1 |W_1|^{\alpha_1})\}\le A_1$ and ${\rm E}\{\exp(c_2 |W_2|^{\alpha_2})\}\le A_2$, 
	where $c_1,c_2,\alpha_1,\alpha_2,A_1,A_2>0$. We have
	\beqrs
	{\rm E} \big\{\exp\big(\min(c_1,c_2) |W_1 W_2|^{ {\alpha_1 \alpha_2}/{(\alpha_1+\alpha_2)}}\big)\big\}<\max(A_1,A_2).  
	\eeqrs
}
\noindent\textit{Proof of Lemma \ref{lem02}:} By Holder's or Young's inequality, 
\beqrs
&&\E \big\{\exp\big(\min(c_1,c_2) |W_1 W_2|^{ {\alpha_1 \alpha_2}/{(\alpha_1+\alpha_2)}}\big)\big\} \\
&\leq & \min(c_1,c_2) \E\big[ \exp\{  |W_1|^{\alpha_1}{\alpha_2}/{(\alpha_1+\alpha_2)}  +  |W_2|^{\alpha_2}{\alpha_1}/{(\alpha_1+\alpha_2)}\}\big]\\
&\leq &    {A_1\alpha_2}/{(\alpha_1+\alpha_2)}   + A_2{\alpha_1}{(\alpha_1+\alpha_2)}\leq  \max(A_1,A_2).
\eeqrs
The proof is completed.  \hfill$\fbox{}$

{\lemm\label{lem03}
	Under condition (A2), we have there exists a constant $C>0$,
	\beqr \label{lem31}
	&&\pr \left(  \|\bar{\x} \bar{\x} \trans\|_{\infty}\geq C \log(p)/n \right)=O(p^{-1}), ~~~{\rm and}
	\\ \label{lem32}
	&& \pr\{ \|n^{-1} \sum_{i=1}^n \x_i \x_i \trans
	-\bSig\|_{\infty} \geq C \{\log(p)/{n}\}^{1/2}\} =O(p^{-1}).
	\eeqr}
\noindent\textit{Proof of Lemma \ref{lem03}:} Writing $\e_k$ as the unit-length $p$-vector with its $k$-th entry being one, we have
$
\|\bar{\x} \bar{\x} \trans\|_\infty = \underset{k,l}\max  |\e_k \trans \bar{\x} \bar{\x} \trans \e_l|.
$
Note that $\e_k \trans \x_1,\cdots, \e_k \trans \x_n$ are independent centered sub-Gaussian variables. By Hoeffding's inequality \citep[Theorem 2.6.3]{vershynin2016high},
$\pr(|\e_k \trans \bar{\x}|\geq t) \leq 2 \exp(-c n t^2),\textrm{ for any  } t \ge 0,$
and then
$\pr(|\e_k \trans \bar{\x} \bar{\x} \trans \e_l|\geq t) \leq \pr(|\e_k \trans \bar{\x}|\geq \sqrt{t})+\pr(|\e_l \trans \bar{\x}|\geq \sqrt{t})\leq 4 \exp(-c n t).
$
Therefore,
\beqrs
\pr \left\{ \|\bar{\x} \bar{\x} \trans\|_{\infty} \geq t \right\} \leq \sum_{k,l} \pr(|\e_k \trans \bar{\x} \bar{\x} \trans \e_l|\geq t)\leq 4p^2 \cdot \exp(-cn t).
\eeqrs
Set $t= c^{-1}C \log{(p)}/n$ for large enough $C$, which yields the conclusion \eqref{lem31}.

Similarly, $\e_k \trans \x_i \x_i \trans \e_l-\e_k \trans\bSig \e_l,~i=1,\cdots,n$ are independent centered sub-exponential variables. By Bernstein's inequality \citep[Theorem 2.8.2]{vershynin2016high}, we get
\beqrs
&&\pr\left\{ \Big|\e_k \trans\Big(n^{-1} \sum_{i=1}^n \x_i \x_i \trans-\bSig\Big) \e_l\Big|\geq t \right\} \leq  2 \exp\left\{-n  \min(c_1 t^2,c_2t) \right\}, ~~{\rm and}
\\
&&\pr\left( \Big\|n^{-1} \sum_{i=1}^n \x_i \x_i \trans
-\Sigma\Big\|_{\infty} \geq t\right)\leq 2 p^2 \cdot \exp\left\{-n  \min(c_1 t^2,c_2t) \right\}.
\eeqrs
Choose $t= C\{\log{(p)}/n\}^{1/2}$ with a sufficiently large $C$ to complete  proof of \eqref{lem32}.    \hfill$\fbox{}$

{\lemm\label{lem04}
	Under conditions (A2) and (A3), there exists a constant $C>0$ such that,
	\beqr  
	&& \\ \label{lem41}
&& \pr \left[  \Big\|n^{-1}\sum_{i=1}^n Y_i \x_i -{\rm E} Y\x\Big\|_\infty \geq C  \{n^{-\alpha/(\alpha+1)}\log(p)\}^{1/2} \right]	
	=O(p^{-1}), ~{\rm and} \nonumber\\
	&& \\ \label{lem42}
	&& \pr\left[ \Big\|n^{-1}\sum_{i=1}^n Y_i \x_i \x_i \trans-{\rm E} Y \x \x \trans\Big\|_\infty\geq C  \{n^{-\alpha/(\alpha+1)}\log(p)\}^{1/2} \right]=O(p^{-1}).\nonumber
	\eeqr
}
\noindent\textit{Proof of Lemma \ref{lem04}:} We prove \eqref{lem42} only in what follows and \eqref{lem41} can be proved using similar arguments. For $\e_k, \e_l$,
\begin{align*}
\e_k \trans \Big(n^{-1}\sum_{i=1}^n Y_i \x_i \x_i \trans-{\rm E} Y \x \x \trans\Big) \e_l=n^{-1} \sum_{i=1}^n Y_i (\e_k \trans \x_i) (\e_l \trans \x_i)-\e_k \trans ({\rm E} Y \x \x \trans) \e_l.
\end{align*}
By condition (A2), there exist constants $c_0$ and $C_0$ such that
\[ {\rm E} \{\exp(c_0 |\e_k \trans \x_i \e_l \trans \x_i| )\} \leq {\rm E} \{\exp ( c_0 |\e_k \trans \x_i|^2 )\}+{\rm E} \{\exp ( c_0 |\e_l \trans \x_i|^2)\}\leq 2C_0.
\]
By condition (A3) and Lemma \ref{lem02}, we have there exist constants $c_2, C_2$ such that
$
{\rm E} \Big\{\exp \Big(c_2 |Y_i (\e_k \trans \x_i) (\e_l \trans \x_i)|^{{\alpha}/{(\alpha+1)}}  \Big)\Big\}\le C_2.
$
By Lemma \ref{lem01}, we have
\beqrs
\pr \Big\{ \Big| \e \trans (n^{-1}\sum_{i=1}^n Y_i \x_i \x_i \trans-{\rm E} Y \x \x \trans) \trans \tilde{\e} \Big| \geq t \Big\} \leq   c_2 \exp(-c_3 n^{ {\alpha}/({\alpha+1})} t^2).
\eeqrs
Using the similar arguments as in the proofs of Lemma \ref{lem03}, we can show
\beqrs
\pr\Big[ \Big\|n^{-1}\sum_{i=1}^n Y_i \x_i \x_i \trans-\E Y \x \x \trans\Big\|_\infty\geq C  \{n^{-\alpha/(\alpha+1)}\log(p)\}^{1/2} \Big]=O(p^{-1}).
\eeqrs
%
The proof is completed.  \hfill$\fbox{}$

\subsection{Appendix B: The $\ell_1$-Penalized Estimation}
Let $\A \in \mR^{q \times q},~\a \in \mR^q$ be unknown parameters and $\A$ is a positive definite symmetric matrix. 
To estimate $\b^\ast \defby \A^{-1} \a$, we consider the $\ell_1$-penalized approach:  
\beqr\label{lemmal1}
\wh{\b}=\argmin_{\b \in \mR^q}  \b \trans \wh{\A} \b/2-\wh{\a} \trans \b+\lambda \|\b\|_1,
\eeqr
where $\lambda$ is the tuning parameter and $\hat{\A}$ and $\hat{\a}$ are the empirical estimators of $\A$ and $\a$, respectively.
In the sequel, we establish theoretical results for solving \eqref{lemmal1}. These general results will then be used to prove the main theorems in our paper.
{\lemm\label{lem05}  Denote $\Delta=\|\wh{\a}-\a \|_{\infty}+\|(\wh{\A}-\A)\b^\ast\|_{\infty}$ and let $\supp=\{i: \b^\ast_i\neq 0 \}$ be the support of $\b^\ast$.
	Assume that
	$	\|\A_{\supp^c,\supp}\A_{\supp,\supp}^{-1}\|_L+2\|\b^\ast \|_0 \|\A_{\supp,\supp}^{-1}\|_{L} \|\wh{\A}-\A\|_{\infty}<1,
	$
	and
	$
	\lambda>2  ( 1-\|\A_{\supp^c,\supp}\A_{\supp,\supp}^{-1}\|_L-2\|\b^\ast \|_0 \|\A_{\supp,\supp}^{-1}\|_{L} \|\wh{\A}-\A\|_{\infty} )^{-1}\Delta,
	$
	we have 
	\begin{itemize}
		\item[(i)] $\wh \b_{\supp^c}={\bf 0}$;
		\item[(ii)] 
		$\|\wh{\b}-\b^\ast\|_\infty \leq {2\lambda }({1-\|\b^\ast \|_0 \|\A_{\supp,\supp}^{-1}\|_{L} \|\wh{\A}-\A\|_{\infty}})^{-1}\|\A_{\supp,\supp}^{-1}\|_{L}.$
	\end{itemize}
}

\noindent\textit{Proof of Lemma \ref{lem05}:}
Given the true support $\supp$,  we consider the estimation
\beqrs
\wh{\b}^0&=&\argmin_{\b \in \mR^q,~\b_{\supp^c}=0}  \b \trans \wh{\A} \b/2-\wh{\a} \trans \b+\lambda \|\b\|_1\\
&=&\argmin_{\b \in \mR^q,~\b_{\supp^c}=0}   \b_{\supp} \trans \wh{\A}_{\supp,\supp} \b_{\supp}/2-\wh{\a}_{\supp} \trans \b_{\supp}+\lambda \|\b_{\supp}\|_1.
\eeqrs
By the Karush-Kuhn-Tucker (KKT) condition, we have
\beqr \label{la6}
\wh{\A}_{\supp,\supp} \wh{\b}^0_{\supp}-\wh{\a}_{\supp} =-\lambda \Z,
\eeqr
where $\Z$ is the sub-gradient of $\|\b _{\supp}\|_1$. 
By the definition of $\b^\ast=\A^{-1} \a$, we have
\beqrs
\begin{pmatrix}
	\a_{\supp}\\
	\a_{\supp^c}
\end{pmatrix}=\begin{pmatrix}
	\A_{\supp,\supp}& \A_{\supp,\supp^c}\\
	\A_{\supp^c,\supp}& \A_{\supp^c,\supp^c}\\
\end{pmatrix} \begin{pmatrix}
	\b^\ast_{\supp}\\
	\textbf{0}\\
\end{pmatrix}=\begin{pmatrix}
	\A_{\supp,\supp} \b^\ast_{\supp}\\
	\A_{\supp^c,\supp} \b^\ast_{\supp}
\end{pmatrix},
\eeqrs
and hence we have
$
\wh{\A}_{\supp,\supp} \wh{\b}^0_{\supp}-\A_{\supp,\supp} \b^\ast_{\supp}+\a_{\supp}-\wh{\a}_{\supp}=-\lambda \Z.
$
Consequently, we obtain,
\beqr\label{la33}
\wh{\b}^0_{\supp}-\b^\ast_{\supp}=-\A_{\supp,\supp}^{-1} \left\{ \lambda \Z+(\wh{\A}_{\supp,\supp}-\A_{\supp,\supp})\wh{\b}^0_{\supp}+(\a_{\supp}-\wh{\a}_{\supp})\right\}.
\eeqr
Using the triangle inequality, we can show that,
\beqrs
&&\|\wh{\b}^0_{\supp}-\b^\ast_{\supp}\|_\infty   \\
&\leq& \|\A_{\supp,\supp}^{-1}\|_{L}\Big\{\lambda  \|\Z\|_{\infty}+ \|(\wh{\A}_{\supp,\supp}-\A_{\supp,\supp})(\wh{\b}^0_{\supp}-\b^\ast_{\supp})\|_\infty \\
&&+\|(\wh{\A}_{\supp,\supp}-\A_{\supp,\supp})\b_{\supp}^\ast+\a_{\supp}-\wh{\a}_{\supp} \|_{\infty}\Big\}\\
&\leq&   \|\A_{\supp,\supp}^{-1}\|_{L} \Big\{ \lambda+\|\b^{\ast}\|_0 \|\wh{\A}-\A\|_{\infty} \|\wh{\b}^0_{\supp}-\b^\ast_{\supp}\|_\infty+\|(\wh{\A}-\A)\b^\ast+\a-\wh{\a} \|_{\infty}\Big\},
\eeqrs
which implies that
\beqr \label{la3}
&&\|\wh{\b}^0_{\supp}-\b^\ast_{\supp}\|_\infty \\
&\leq&  (1-\|\b^{\ast}\|_0 \|\A_{\supp,\supp}^{-1}\|_{L} \|\wh{\A}-\A\|_{\infty})^{-1} \|\A_{\supp,\supp}^{-1}\|_{L}(\lambda+\Delta). \nonumber
\eeqr
Next, we  show that
$\wh{\b}^0$ is exactly the minimizer to $\min\limits_{\b \in \mR^q}   \b \trans \wh{\A} \b/2-\wh{\a} \trans \b+\lambda \|\b\|_1. $
By the KKT condition, it is sufficient to prove
\beqr
&& \|(\wh\A \wh{\b}^0-\wh{\a})_{\supp}\|_{\infty} \leq \lambda, \textrm{ and }\label{la1}\\
&& \|(\wh\A \wh{\b}^0-\wh{\a})_{\supp^c}\|_{\infty} <\lambda. \label{la2}
\eeqr
Since $(\wh\A \wh{\b}^0-\wh{\a})_{\supp}=\wh{\A}_{\supp,\supp} \wh{\b}^0_{\supp}-\wh{\a}_{\supp}$, \eqref{la1} is true by \eqref{la6}. For \eqref{la2}, we have
\beqrs
(\wh\A \wh{\b}^0-\wh{\a})_{\supp^c}&=&\wh{\A}_{\supp^c,\supp} \wh{\b}^0_{\supp}-\wh{\a}_{\supp^c} 
= \wh{\A}_{\supp^c,\supp} \wh{\b}^0_{\supp}-\A_{\supp^c,\supp} \b^\ast_{\supp}+\a_{\supp^c}-\wh{\a}_{\supp^c}\\
&=&\wh{\A}_{\supp^c,\supp} (\wh{\b}^0_{\supp}-\b^\ast_{\supp})+(\wh{\A}_{\supp^c,\supp}-\A_{\supp^c,\supp})\b^\ast_{\supp}+\a_{\supp^c}-\wh{\a}_{\supp^c}\\
&=&(\wh{\A}_{\supp^c,\supp}-\A_{\supp^c,\supp})(\wh{\b}^0_{\supp}-\b^\ast_{\supp})+\A_{\supp^c,\supp}\A_{\supp,\supp}^{-1} \{\A_{\supp,\supp}(\wh{\b}^0_{\supp}-\b^\ast_{\supp})\}\\
&&+\{(\wh{\A}-\A)\b^\ast+\a-\wh{\a}\}_{\supp^c}.
\eeqrs
Thus, it follows from \eqref{la33} and \eqref{la3} that  $\|(\wh\A \wh{\b}^0-\wh{\a})_{\supp^c}\|_{\infty}$ is less than or equal to
\beqrs
& & \|\b^{\ast}\|_0 \|\wh \A-\A\|_{\infty} \|\wh{\b}^0_{\supp}-\b^\ast_{\supp}\|_{\infty}\\
&& +\|\A_{\supp^c,\supp}\A_{\supp,\supp}^{-1}\|_L(\lambda+\Delta+ \|\b^{\ast}\|_0 \|\wh \A-\A\|_{\infty} \|\wh{\b}^0_{\supp}-\b^\ast_{\supp}\|_{\infty}) +\Delta\\
& \leq & \frac{ (1+ |\A_{\supp^c,\supp}\A_{\supp,\supp}^{-1}\|_L)(\lambda+\Delta)}{1-\|\b^{\ast}\|_0 \|\A_{\supp,\supp}^{-1}\|_{L} \|\wh{\A}-\A\|_{\infty}}-\lambda\\
&=& \lambda+\Bigg\{\Delta-\frac{1- |\A_{\supp^c,\supp}\A_{\supp,\supp}^{-1}\|_L-2\|\b^{\ast}\|_0 \|\A_{\supp,\supp}^{-1}\|_{L} \|\wh{\A}-\A\|_{\infty}}{1+ |\A_{\supp^c,\supp}\A_{\supp,\supp}^{-1}\|_L} \lambda\Bigg\}\\
&&\hspace{.8cm}\left\{ \frac{1+ \|\A_{\supp^c,\supp}\A_{\supp,\supp}^{-1}\|_L}{1-\|\b^{\ast}\|_0 \|\A_{\supp,\supp}^{-1}\|_{L} \|\wh{\A}-\A\|_{\infty}}\right\}.
\eeqrs
When $\lambda> {2({1-\|\A_{\supp^c,\supp}\A_{\supp,\supp}^{-1}\|_L-2\|\b^\ast \|_0 \|\A_{\supp,\supp}^{-1}\|_{L} \|\wh{\A}-\A\|_{\infty} })^{-1}\Delta} ,$
we have
$\|(\wh\A \wh{\b}^0-\wh{\a})_{\supp^c}\|_{\infty}<\lambda.
$
Consequently, $\wh{\b}=\wh{\b}^0$ and \eqref{la2} is an immediate result of \eqref{la3} by noting $\Delta\leq \lambda$. The proof is completed.  \hfill$\fbox{}$

\subsection{Appendix C: Proof of Proposition \ref{prop1} }
Recall that $\E(Y\mid\x) = \alpha + (\x-\u) \trans \bb + (\x-\u) \trans \bOme (\x-\u)$. Direct calculations  show
\beqrs
\cov(\x,Y)&=&E\Big[\left\{Y-E(Y)\right\}(\x-\u)\Big]\\
&=& E\Big[\left\{(\x-\u) \trans \bb + (\x-\u) \trans \bOme (\x-\u)-\tr(\bOme \bSig)\right\}(\x-\u)\Big]\\
&=& E \left\{(\x-\u)(\x-\u) \trans \bb+(\z \trans \bGam_0 \trans \bOme \bGam_0 \z) \bGam_0 \z\right\}=\bSig \bb.
\eeqrs
The proof of the first part is completed. Next we prove the second part.
\beqrs
\bLam_y 
&=& E\Big[\left\{(\x-\u) \trans \bb + (\x-\u) \trans \bOme (\x-\u)-\tr(\bOme \bSig)\right\}(\x-\u)(\x-\u)\trans\Big]\\
&=&E(\x-\u) (\x-\u)  \trans \bOme (\x-\u)(\x-\u)\trans- \tr(\bOme \bSig) \bSig\\
&=& E \left\{  \bGam_0 \z \z \trans (\bGam_0 \trans \bOme \bGam_0) \z \z \trans \bGam_0 \trans\right\}- \tr (\bGam_0 \trans \bOme \bGam_0) \bGam_0 \bGam_0 \trans\\
&=& \bGam_0 \Big[E \left\{  \z \z \trans (\bGam_0 \trans \bOme \bGam_0) \z \z \trans\right\}- \tr (\bGam_0 \trans \bOme \bGam_0) \I_p \Big]\bGam_0 \trans \\
&=&  \bGam_0 \{2 \bGam_0 \trans \bOme \bGam_0-(\Delta-3) \diag(\bGam_0 \trans \bOme \bGam_0)\} \bGam_0 \trans  \\
&=& 2 \bSig \bOme \bSig-(\Delta-3) \bGam_0 \diag(\bGam_0 \trans \bOme \bGam_0) \bGam_0 \trans.
\eeqrs		
Thus,  $\bOme=  \bSig^{-1} \bLam \bSig^{-1}\big/2$ when $\Delta=3$ or  $\diag(\bGam_0 \trans \bOme \bGam_0)=0$. The proof is completed.  \hfill$\fbox{}$

\subsection{Appendix D: Proof of Theorem \ref{thm1}}
We provide proofs for (i) and (iii) in what follows because  (ii) is an immediate result of (i) and (iii)  and (iv) can be obtained analog to (iii).
For the target parameter matrix $2\bOme=\bSig^{-1}\bLam \bSig^{-1}$, we consider its vectorization
\beqr
2 \vec{(\bOme)}=\vec{(\bSig^{-1}\bLam \bSig^{-1})}=(\bSig^{-1}\otimes \bSig^{-1}) \vec{(\bLam)}=\bGam^{-1}  \vec{(\bLam)},
\eeqr
where $\bGam=\bSig \otimes \bSig$ is a positive and symmetric matrix. For the estimation,
\beqrs
\hOme_y=\argmin_{\B \in \mR^{p \times p}} \tr \{(\B\hSig)^2\}-\tr (\B \hLam_y)+\lambda_{1n} \|\B\|_1.
\eeqrs
Equivalently, we have
\beqrs
\vec{(\hOme_y)}=\argmin_{\B \in \mR^{p \times p}} \vec{(\B)}  \trans \hGam \vec{(\B)} -\vec{(\hLam_y)} \trans \vec{(\B)}+\lambda_{1n} \|\vec{(\B)}\|_1,
\eeqrs
where
$ \hGam \defby \hSig \otimes \hSig$. 
Therefore, we can use Lemma \ref{lem05} to derive the theoretical properties by letting
$
\A=2 \bGam,~\a=\vec{(\bLam_y)},~\wh \A=2\hGam~\mbox{and}~\wh \a=\vec{(\hLam_y)}.
$

Recall the definition of $\hSig$ and $\hLam_y$.
\beqrs
\hSig&=&n^{-1}\sum_{i=1}^n(\x_i-\bar{\x})(\x_i-\bar{\x}) \trans=n^{-1}\sum_{i=1}^n \x_i \x_i \trans-\bar{\x} \bar{\x} \trans, ~~~{\rm and} \\
\hLam_y 
&=&n^{-1}\sum_{i=1}^n Y_i \x_i \x_i \trans-n^{-1}\sum_{i=1}^n Y_i (\bar{\x} \x_i \trans +\x_i \bar{\x} \trans)-n^{-1} \bar{Y}\sum_{i=1}^n \x_i \x_i \trans+2 \bar{Y} \bar{\x} \bar{\x} \trans.
\eeqrs
Lemmas \ref{lem03} and \ref{lem04} ensure that there exists a constant $C>0$ such that with probability greater than $1-O(p^{-1})$,
$\|\hSig
-\bSig\|_{\infty} \leq C (n^{-1}\log p)^{1/2}$ and $\|\hLam_y-\bLam_y\|_\infty \leq C \{n^{-\alpha/({\alpha+1})}\log{(p)}\}^{1/2}.$
Note that
$\|\hGam-\bGam\|_\infty=\|\hSig \otimes(\hSig-\bSig)+(\hSig-\bSig)\otimes \bSig\|_\infty\leq (\|\hSig\|_\infty+\|\bSig\|_\infty) \|\hSig-\bSig\|_\infty,$
with probability greater than $1-O(p^{-1})$, we have,
$\|\hGam
-\bGam\|_{\infty} \leq C_1 (n^{-1}\log p)^{1/2},$
for some constant $C_1>0$ and
$\|\bGam_{\supp^c,\supp}\bGam_{\supp,\supp}^{-1}\|_L+2\|\bOme \|_0 \|\bGam_{\supp,\supp}^{-1}\|_{L} \|\hGam-\bGam\|_{\infty} \leq  1-\kappa+2C_1 Ms_p (n^{-1}\log p)^{1/2} = 1-\kappa+o(1)<1.$
Next, we consider 
$
\Delta_1 \defby \|\vec{(\hLam_y)}-\vec{(\bLam_y)}\|_\infty +2\|(\hGam-\bGam) \vec{(\bOme)}\|_{\infty}. 
$
Note that
\beqrs
\|(\hGam-\bGam) \vec{(\bOme)}\|_{\infty}&=&\|(\hSig \otimes \hSig-\bSig \otimes \bSig) \vec{(\bOme)} \|_{\infty}\\
&=&\|\vec{(\hSig \bOme \hSig-\bSig \bOme \bSig)}\|_{\infty} 
=  \|\hSig \bOme \hSig-\bSig \bOme \bSig\|_{\infty}\\
&\leq & \|(\hSig-\bSig) \bOme (\hSig-\bSig) \|_{\infty}+2\|\bSig \bOme (\hSig-\bSig)\|_{\infty}.
\eeqrs
Under the conditions of the Proposition \ref{prop1}, 
\beqr \label{var}
\var\{E(Y\mid \x)\}=\bb \trans \bSig \bb+2 \tr (\bOme \bSig \bOme \bSig) \leq EY^2 < \infty.
\eeqr
We thus conclude  $\|\bOme\|_{\infty}< \infty$ and $\|\bOme \bSig\|<\infty$.
Then,
$\|(\hSig-\bSig) \bOme (\hSig-\bSig) \|_{\infty}\leq s_p \|\bOme\|_{\infty}  \|\hSig-\bSig\|^2_{\infty}
=o(1)  (n^{-1}\log p)^{1/2},
$and
$\pr\{ \|\bSig \bOme (\hSig-\bSig)\|_{\infty} \geq C \{\log(p)/{n}\}^{1/2}\} =O(p^{-1}) 
$ by invoking Lemma \ref{lem03} and the fact
$\|\bSig \bOme (\hSig-\bSig)\|_{\infty}=\max\limits_{i,j}|\e_i \trans \bSig \bOme (\hSig-\bSig) \e_j|.
$ Consequently, there exist a constant $C_2>0$ such that $\Delta_1\leq  C_2 \{n^{-\alpha/({\alpha+1})}\log{(p)}\}^{1/2}$ with probability larger than $1-O(p^{-1})$.
Set $\lambda_{1n}= {3}{\kappa}^{-1}C_2  \{n^{-\alpha/({\alpha+1})}\log{(p)}\}^{1/2}$ and 
by Lemma \ref{lem05}. We can conclude that with probability larger than $1-O(p^{-1})$, $\{\hOme_y\}_\supp=\textbf{0}$, and
$\|\hOme_y-\bOme\|_{\infty}\leq {4}{\kappa}^{-1}C_2 M  \{n^{-\alpha/({\alpha+1})}\log{(p)}\}^{1/2}.
$ The proof is now completed.  \hfill$\fbox{}$

\subsection{Appendix E: Proof of Theorem \ref{thm2}}
Given  $\wh \bb$,
\beqrs
\hLam_r 
&=&n^{-1}\sum_{i=1}^n\{ (Y_i-\bar{Y})-(\x_i-\bar{\x}) \trans \bb \}  (\x_i-\bar{\x})(\x_i-\bar{\x}) \trans\\
&&+ n^{-1}\sum_{i=1}^n (\x_i-\bar{\x}) \trans (\bb-\wh \bb)  \cdot (\x_i-\bar{\x})(\x_i-\bar{\x}) \trans \defby  \A_1+\A_2.
\eeqrs
Given true $\bb$,  \eqref{var} ensures that $\bb \bSig \bb \leq EY^2<\infty$, indicating that $\|\bb\|<C$ for some constant $C$.  Thus, $E \{\exp (c_1 |Y-\b \trans \x|^\alpha)\}\le C_1 < \infty$ and with probability greater than $1-O(p^{-1})$,
\beqr \label{thm2a1}
\|\A_1-\bLam\|_\infty \leq C_1  \{n^{-\alpha/(\alpha+1)}\log{(p)}\}^{1/2}.
\eeqr
Writing $ \wh \bb-\bb=(\eta_1,\cdots,\eta_p)\trans=\sum\limits_{k=1}^p \eta_k \e_{k},$
we have,
\beqrs
\Big\|\frac{1}{n}\sum_{i=1}^n (\wh \bb-\bb) \trans \x_i   (\x_i \x_i \trans)\Big\|_\infty &=&\Big\|\frac{1}{n}\sum_{i=1}^n \left(\sum_{k=1}^p \eta_t \e_{k} \right)\trans \x_i ( \x_i \x_i \trans)\Big\|_\infty \\
&\leq& \sum_{k=1}^p |\eta_k| \cdot \Big\|n^{-1}\sum_{i=1}^n \e_{k} \trans \x_i ( \x_i \x_i \trans)\Big\|_\infty,
\eeqrs
For $\e_k$,   $\E \{( \e_k \trans \x ) ( \x \x \trans)\}=0$.  By Lemma \ref{lem04}, there exists a large constant $C_2$ such that,
\beqrs
\pr\Big\{ \Big\|n^{-1}\sum_{i=1}^n \e_k \trans \x_i  (\x_i \x_i \trans)\Big\|_\infty\geq C_2 ( n^{-{2}/{3}}\log{(p)})^{1/2} \Big\}\leq p^{-2},
\eeqrs
which implies
\beqrs
&&\pr\Big\{\Big\|n^{-1}\sum_{i=1}^n (\wh \bb-\bb) \trans \x_i ( \x_i \x_i \trans)\Big\|_\infty \geq C_2 \sum_{k=1}^p |\eta_k|( n^{-{2}/{3}}\log{(p)})^{1/2} \Big\}\\
&\leq& \sum_{k=1}^p \pr\Big\{ \Big\|n^{-1}\sum_{i=1}^n \e_k \trans \x_i ( \x_i \x_i \trans)\Big\|_\infty\geq C_2 ( n^{-{2}/{3}}\log{(p)})^{1/2} \Big\}\leq p^{-1}.
\eeqrs
Note that $\sum\limits_{k=1}^p |\eta_k|=\|\wh \bb-\bb\|_1$. With probability greater than $1-p^{-1}$,
\beqrs
\Big\|n^{-1}\sum_{i=1}^n (\wh \bb-\bb) \trans \x_i ( \x_i \x_i) \trans\Big\|_\infty \leq C_2 \|\wh \bb-\bb\|_1 \{ n^{-2/3}\log{(p)}\}^{1/2},
\eeqrs
which together with Lemma \ref{lem03} yields 
\beqr \label{thm2a2}
\|\A_2\|_\infty \leq  C_3 \|\wh \bb-\bb\|_1 \{ n^{-2/3}\log{(p)}\}^{1/2} .
\eeqr
Combing \eqref{thm2a1} and \eqref{thm2a2}, with probability greater than $1-O(p^{-1})$,
\beqr
&& \|\hLam_r-\bLam\|_\infty \\
&\leq&\|\hLam_r-\bLam\|_\infty  C_4 \{n^{-\alpha/(\alpha+1)}\log{(p)}\}^{1/2}+C_5 \|\wh \bb-\bb\|_1 \{ n^{-2/3}\log{(p)}\}^{1/2} .\nonumber
\eeqr
Similarly to the proof of the Theorem \ref{thm1}, we can set
\beqrs
\lambda_{2n}=C_6 \{n^{-\alpha/(\alpha+1)}\log{(p)}\}^{1/2}+C_7 \|\wh \bb-\bb\|_1 \{ n^{-2/3}\log{(p)}\}^{1/2} 
\eeqrs
and conclude that with probability lager than $1-O(p^{-1})$, $\{\hOme_r\}_\supp=\textbf{0}$ and
$
\|\hOme_r-\bOme\|_{\infty}\leq C_8 M\lambda_{2n},
$
for some constant $C_8$.
\bibliography{ref}
\end{document}